\numberwithin{equation}{section}
\newtheorem{thm}{Theorem}[section]
\newtheorem{lem}[thm]{Lemma}
\newtheorem{cor}[thm]{Corollary}
\newtheorem{pro}[thm]{Proposition}
\newtheorem{rem}[thm]{Remark}
\newtheorem{exm}[thm]{Example}
\newcommand{\RM}{\mathbb{R}}
\newcommand{\CM}{\mathbb{C}}
\newcommand{\HM}{\mathbb{H}}
\newcommand{\Mat}{\operatorname{Mat}}
\newcommand{\RP}{\operatorname{Re}}
\newcommand{\Spec}{\operatorname{Spec}}
\title{{\Large {\bf Quaternionic quantum walks of Szegedy type 
and zeta functions of graphs}}
\author{
{\small Norio Konno}\\
{\scriptsize Department of Applied Mathematics, 
Faculty of Engineering, 
Yokohama National University}\\
{\scriptsize Hodogaya, Yokohama 240-8501, Japan, Email: konno@ynu.ac.jp}\\
{\small Kaname Matsue}\\
{\scriptsize Institute of Mathematics for Industry, Kyushu University}\\
{\scriptsize International Institute for Carbon-Neutral Energy Research (WPI-${\rm I^2}$CNER), Kyushu University}\\
{\scriptsize Motooka, Nishi-ku, Fukuoka 819-0395, Japan, Email: kmatsue@imi.kyushu-u.ac.jp}\\
{\small Hideo Mitsuhashi}\\
{\scriptsize Department of Applied Informatics, Faculty of Science and Engineering, Hosei University}\\
{\scriptsize  Koganei, Tokyo 184-8584, Japan, Email: hmitsu@hosei.ac.jp}\\
{\small Iwao Sato}\\
{\scriptsize Oyama National College of Technology}\\
{\scriptsize Oyama, Tochigi 323-0806, Japan, Email: isato@oyama-ct.ac.jp}\\}
}
\date{\empty }
\begin{document}
\maketitle

\par\noindent
\begin{small}
\par\noindent
{\bf Abstract}. 
We define a quaternionic extension of the Szegedy walk on a graph and study 
its right spectral properties. 
The condition for the transition matrix of the quaternionic Szegedy walk on a graph 
to be quaternionic unitary is given. 
In order to derive the spectral mapping theorem for the quaternionic Szegedy walk, 
we derive a quaternionic extension of the determinant expression 
of the second weighted zeta function of a graph. 
Our main results determine explicitly all the right eigenvalues of the 
quaternionic Szegedy walk by using complex right eigenvalues of the corresponding 
doubly weighted matrix. 
We also show the way to obtain eigenvectors corresponding to 
right eigenvalues derived from those of doubly weighted matrix. 
  
\footnote[0]{
{\it Abbr. title:} Quaternionic quantum walks of Szegedy type }
\footnote[0]{
{\it AMS 2010 subject classifications: }
60F05, 05C50, 15A15, 11R52
}
\footnote[0]{
{\it PACS: } 
03.67.Lx, 05.40.Fb, 02.50.Cw
}
\footnote[0]{
{\it Keywords: } 
Quantum walk, Ihara zeta function, Quaternion, Quaternionic quantum walk 
}
\end{small}

\setcounter{equation}{0}
\section{Introduction}\label{sec:Intro}
The quantum walk has been studied over the last two decades in various fields such as 
quantum information theory and mathematics. Quantum walks are classified into two types: discrete-time quantum walks and continuous-time 
quantum walks. We treat the discrete-time quantum walk on a graph in this study. 
The discrete-time quantum walk in one dimension lattice was intensively studied by Ambainis et al. \cite{AmbainisEtAl2001}. One of its most striking properties is the spreading property of the walker. The standard deviation of the position grows linearly in time, quadratically faster than classical random walk. On the other hand, a walker stays at the starting position, i.e., localization occurs. 
The reviews and books on quantum walks are Kempe \cite{Kempe2003}, Konno \cite{Konno2008b}, Venegas-Andraca \cite{Venegas2013}, Cantero et al. \cite{CGMV}, Manouchehri and Wang \cite{MW2013}, Portugal \cite{Port2013} for examples. 

One of the most important discrete-time quantum walks on graphs is the Grover walk on a graph. 
The Grover walk on a graph was first formulated in \cite{Grover1996} for complete graphs. 
After Grover's approach, several researchers, for example \cite{Watrous2001,Ambainis2003b,Szegedy2004} proposed quantum walks on graphs. 
Emms et al. \cite{EmmsETAL2006} and Godsil and Guo \cite{GG2010} applied the spectral property of the Grover walk to graph isomorphism problem. 
In the present article, we focus on the discrete-time quantum walk on a graph based on 
the quantum search algorithm proposed by Szegedy \cite{Szegedy2004}. 
This walk arises naturally from a Markov chain, and is called the Szegedy walk. 

Zeta functions of graphs originate from the Ihara zeta function for a regular graph 
introduced by Ihara. In \cite{Ihara1966}, Ihara defined a $\mathfrak{p}$-adic analogue of 
the Selberg zeta function associated to a certain kind of discrete subgroups of 
the $2$ by $2$ projective linear group over $\mathfrak{p}$-adic fields. 
After that, the Ihara zeta function has been extensively studied as a zeta function of a graph by many researchers 
\cite{Ihara1966,Serre,Sunada1986,Sunada1988,Hashimoto1989,Bass1992,ST1996,FZ1999,KS2000}. 
Sato \cite{Sato} defined a new zeta function (the second weighted zeta function) of a graph 
by modifying a determinant expression of the (first) weighted zeta function which had been 
proposed by Mizuno and Sato \cite{MS2004}. This new zeta function and its 
determinant formula play essential roles in the determination of eigenvalues of the quantum 
walk on a graph in Konno and Sato \cite{KS2012} and in another proof of the Smilansky's formula 
\cite{Smilansky2007} for the characteristic polynomial of the bond scattering matrix of a graph in 
Mizuno and Sato \cite{MS2008}. 
Ren et al.  \cite{RenETAL} found an interesting relation between 
the Ihara zeta function and the Grover walk on a graph, and
showed that the positive support of the transition matrix of the Grover walk is equal to 
the Perron-Frobenius operator (the edge matrix) related to the Ihara zeta function. 
Konno and Sato \cite{KS2012} gave the characteristic polynomials of the transition matrix 
of the Grover walk and its positive support by using the second weighted zeta function 
and the Ihara zeta function, 
and thus obtained an alternative proof of the results on spectra by Emms et al. \cite{EmmsETAL2006}. 
Higuchi et al. \cite{HKSS2013,HKSS2014} introduced the twisted Szegedy walk on a graph 
and obtained the spectral mapping theorem and eigenspaces for the twisted Szegedy walk. 

The quaternion was discovered by Hamilton in 1843. 
Since quaternions do not commute with each other in general, 
it is necessary to treat left eigenvalues and right eigenvalues separately. 
The eigenvalue problem for quaternionic matrices has been investigated 
for nearly a century by a number of researchers. 
Right eigenvalues are well studied, 
in contrast, left eigenvalues are less known and not easy to handle as commented in 
\cite{Zhang1997}. 
In this paper, we concentrate only on the right spectrum of quaternionic quantum walk. 

Recently, an extension of quantum walk to the case of quaternions was established 
by Konno \cite{Konno2015} and its various properties were studied. 
Along this line, Konno et al. \cite{KMS2016} gave a definition of 
a discrete-time quaternionic quantum walk on a graph which can be viewed as 
a quaternionic extension of the Grover walk, 
and obtained the spectral mapping theorem for the walk. 
One of the important backgrounds of quaternionic quantum walk is quaternionic quantum mechanics. 
The possibility of quaternionic quantum mechanics was first pointed out by 
Birkhoff and von Neumann in \cite{BN1936}. 
After that, the subject was studied further by Finkelstein, Jauch, and Speiser \cite{FJS1959}, 
and more recently by Adler \cite{Adler1995} and others. 
For a detailed account including the significance of quaternionic quantum mechanics, 
we recommend Adler's book \cite{Adler1995}. 
As well as quaternionic quantum mechanics, 
we can expect many outcomes through the study on quaternionic quantum walk. 

In the present paper, we define a quaternionic generalization of the Szegedy walk 
(the quaternionic Szegedy walk) on a graph. 
We derive the unitary condition of the transition matrix of the quaternionic Szegedy walk. 
Furthermore, we investigate the right eigenvalue problem, 
and derive all the right eigenvalues of the 
quaternionic transition matrix from those of the doubly weighted matrix 
which can be obtained easily from quaternionic weights on the graph. 
We also show the way to obtain right eigenvectors corresponding to the right eigenvalues 
derived from those of the doubly weighted matrix. 

The rest of the paper is organized as follows. 
In Section \ref{sec:RightEigenOfQuatMat}, we explain some properties of quaternionic matrices 
and right eigenvalues of quaternionic matrices. 
Especially, we give the way to obtain all the right eigenvalues of a quaternionic matrix. 
In Section \ref{sec:IharaZeta}, we provide a summary of the Ihara zeta function and the second weighted zeta function of a graph, and present their determinant expressions. 
In Section \ref{sec:QuatDetFormula}, we derive a quaternionic extension of the determinant formula 
for the second weighted zeta function. 
The result in this section plays a key role in determination of right spectra 
in Section \ref{sec:TransMat}. 
In Section \ref{sec:TransMat}, we give a brief account of the discrete-time quantum walk on a graph 
and define a generalization of it to the case of quaternions, which we call 
the quaternionic Szegedy walk on a graph. 
We also explain the general framework of quaternionic 
quantum mechanics to the minimum extent necessary for the definition of 
quaternionic quantum walks. 
We derive the unitary condition on the transition matrix and determine all the right eigenvalues 
of a quaternionic quantum walk by using those of the corresponding doubly weighted matrix. 
In Section \ref{sec:Eigenvectors}, we derive the way to obtain eigenvectors corresponding to 
right eigenvalues of the transition matrix derived from those of doubly weighted matrix. 
In addition, we give an example of quaternionic Szegedy walk and their right eigenvalues and 
eigenvectors. 

\section{Right eigenvalues and root subspaces of a quaternionic matrix}\label{sec:RightEigenOfQuatMat}
In this section, we give a brief account of quaternions and quaternionic matrix. 
Particularly, we mainly explain eigenvalues and root subspaces of a quaternionic matrix. 
Since quaternions do not mutually commute in general, 
we must treat left eigenvalues and right eigenvalues separately. 
As in \cite{Adler1995}, the states of quaternionic quantum mechanics is described by vectors of a right Hilbert space over the quaternion field. 
Therefore, we deal only with right eigenvalues in this paper. 
We shall review the way to obtain all the right eigenvalues of a quaternionic matrix. 
Expositions of these contents can be found in \cite{Aslaksen1996,KMS2016,DeLeoEtAl,Rodman2014}. 
\cite{Zhang1997} gives an overview of the quaternionic matrix theory. 
The detailed account of recent development of quaternionic linear algebra can be found in 
\cite{Rodman2014}.  

Let $\HM$ be the set of quaternions. $\HM$ is a noncommutative associative 
algebra over $\RM$, whose underlying real vector space has dimension $4$ 
with a basis $1,i,j,k$ which satisfy the following relations: 
\[
i^2=j^2=k^2=-1,\quad ij=-ji=k,\quad jk=-kj=i,\quad ki=-ik=j.
\]
For $x=x_0+x_1i+x_2j+x_3k{\;\in\;}\HM$, $x^*$ denotes the {\it conjugate} of $x$ in $\HM$ 
which is defined by $x^*=x_0-x_1i-x_2j-x_3k$.
We call $|x|=\sqrt{xx^*}=\sqrt{x^*x}=\sqrt{x_0^2+x_1^2+x_2^2+x_3^2}$ the {\it norm} of $x$. 
$\HM$ constitutes a skew field since $x{\;\in\;}\HM$, $x^{-1}=x^*/|x|^2$ for every nonzero element $x$. 
We denote by $\HM^*$ the multiplicative group of nonzero quaternions. 

$\Mat(m{\times}n,\HM)$ denotes the set of $m{\times}n$ quaternionic matrices and 
$\Mat(n,\HM)$ the set of $n{\times}n$ quaternionic square matrices. A quaternion $x$ can be identified with an element of $\Mat(1,\HM)$. 
For every quaternionic matrix ${\bf M}{\;\in\;}\Mat(m{\times}n,\HM)$, 
we can write ${\bf M}={\bf A}+j{\bf B}$ 
uniquely where ${\bf A},{\bf B}{\;\in\;}\Mat(m{\times}n,\CM)$. 
${\bf A}$ and ${\bf B}$ are called the {\it simplex part} and the {\it perplex part} 
of ${\bf M}$ respectively. 
Then, the $\RM$-linear map $\psi$ from $\Mat(m{\times}n,\mathbb{H})$ to $\Mat(2m{\times}2n,\mathbb{C})$ is defined as follows: 
\[
\psi : \Mat(m{\times}n,\mathbb{H}){\;\longrightarrow\;}\Mat(2m{\times}2n,\mathbb{C})
\quad{\bf M}{\;\mapsto\;}\begin{pmatrix}{\bf A}&-\overline{\bf B}\\{\bf B}&
\overline{\bf A}\end{pmatrix},
\]
where $\overline{\bf A}$ is the complex conjugate of ${\bf A}$. 
$\psi$ is multiplicative in the following sense: 

\begin{lem}[\cite{KMS2016}, Lemma 2.1]\label{PsiLem}
Let ${\bf M}{\;\in\;}\Mat(m{\times}n,\HM)$ and ${\bf N}{\;\in\;}
\Mat(n{\times}m,\HM)$. Then 
\[ \psi({\bf M}{\bf N})=\psi({\bf M})\psi({\bf N}). \]
\end{lem}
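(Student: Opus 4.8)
The plan is to reduce the whole statement to the single algebraic fact that governs how the quaternion unit $j$ passes a complex matrix. First I would write ${\bf M}={\bf A}+j{\bf B}$ and ${\bf N}={\bf C}+j{\bf D}$ with ${\bf A},{\bf B}{\;\in\;}\Mat(m{\times}n,\CM)$ and ${\bf C},{\bf D}{\;\in\;}\Mat(n{\times}m,\CM)$, and record the identity
\[ {\bf X}j=j\overline{\bf X}\qquad\text{for every }{\bf X}{\;\in\;}\Mat(p{\times}q,\CM), \]
which follows entrywise from $zj=j\bar{z}$ for $z{\;\in\;}\CM=\RM+\RM i$ (equivalently $ij=-ji$) together with the fact that multiplication by the scalar $j$ is performed entrywise. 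Two immediate consequences I would use are ${\bf A}(j{\bf D})=j\overline{\bf A}{\bf D}$ and $(j{\bf B})(j{\bf D})=j(j\overline{\bf B}){\bf D}=j^{2}\overline{\bf B}{\bf D}=-\overline{\bf B}{\bf D}$.

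Next I would expand the product and collect the simplex and perplex parts:
\[ {\bf M}{\bf N}=({\bf A}+j{\bf B})({\bf C}+j{\bf D})={\bf A}{\bf C}+{\bf A}j{\bf D}+j{\bf B}{\bf C}+(j{\bf B})(j{\bf D})=({\bf A}{\bf C}-\overline{\bf B}{\bf D})+j({\bf B}{\bf C}+\overline{\bf A}{\bf D}). \]
Thus the simplex part of ${\bf M}{\bf N}$ is ${\bf A}{\bf C}-\overline{\bf B}{\bf D}$ and its perplex part is ${\bf B}{\bf C}+\overline{\bf A}{\bf D}$, so the definition of $\psi$ gives
\[ \psi({\bf M}{\bf N})=\begin{pmatrix}{\bf A}{\bf C}-\overline{\bf B}{\bf D}&-{\bf A}\overline{\bf D}-\overline{\bf B}\,\overline{\bf C}\\{\bf B}{\bf C}+\overline{\bf A}{\bf D}&\overline{\bf A}\,\overline{\bf C}-{\bf B}\overline{\bf D}\end{pmatrix}, \]
where I have used $\overline{{\bf B}{\bf C}+\overline{\bf A}{\bf D}}=\overline{\bf B}\,\overline{\bf C}+{\bf A}\overline{\bf D}$ and $\overline{{\bf A}{\bf C}-\overline{\bf B}{\bf D}}=\overline{\bf A}\,\overline{\bf C}-{\bf B}\overline{\bf D}$. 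Finally I would carry out the $2{\times}2$ block multiplication
\[ \psi({\bf M})\psi({\bf N})=\begin{pmatrix}{\bf A}&-\overline{\bf B}\\{\bf B}&\overline{\bf A}\end{pmatrix}\begin{pmatrix}{\bf C}&-\overline{\bf D}\\{\bf D}&\overline{\bf C}\end{pmatrix} \]
and check that it reproduces exactly the same four blocks; this last step is a direct verification.

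I do not expect a genuine obstacle, since the claim is a formal identity: once the commutation rule ${\bf X}j=j\overline{\bf X}$ is available, everything else is bookkeeping. The one point worth flagging is that $\psi$ is $\RM$-linear but not $\CM$-linear, so no complex scalar may ever be moved through $\psi$; every manipulation has to be routed through the conjugation identity above, and this is precisely where a careless computation would go wrong.

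An alternative, more conceptual route — which one might prefer for robustness — is to regard $\HM^{n}$ as a right $\CM$-module isomorphic to $\CM^{2n}$ (using the basis $\{1,j\}$ of $\HM$ over $\CM$, with all ``$1$-components'' listed before all ``$j$-components'') and to note that left multiplication by ${\bf M}{\;\in\;}\Mat(m{\times}n,\HM)$ is then a right-$\CM$-linear map $\HM^{n}{\;\to\;}\HM^{m}$, i.e.\ $\CM^{2n}{\;\to\;}\CM^{2m}$, whose matrix is $\psi({\bf M})$. Since composition of linear maps corresponds to matrix multiplication, multiplicativity of $\psi$ is then automatic, and the only remaining task is to confirm that with this basis ordering the matrix of left multiplication by ${\bf M}={\bf A}+j{\bf B}$ really is the displayed block form $\left(\begin{smallmatrix}{\bf A}&-\overline{\bf B}\\{\bf B}&\overline{\bf A}\end{smallmatrix}\right)$.
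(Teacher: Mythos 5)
Your proof is correct. The paper itself gives no proof of this lemma (it is quoted from \cite{KMS2016}), and your argument is the standard one: decompose ${\bf M}={\bf A}+j{\bf B}$, ${\bf N}={\bf C}+j{\bf D}$, push $j$ past complex matrices via ${\bf X}j=j\overline{\bf X}$ to read off the simplex part ${\bf A}{\bf C}-\overline{\bf B}{\bf D}$ and perplex part ${\bf B}{\bf C}+\overline{\bf A}{\bf D}$ of ${\bf M}{\bf N}$, and match these against the block product $\psi({\bf M})\psi({\bf N})$ --- all four blocks check out, and your closing remark that $\psi$ is only $\RM$-linear correctly identifies the one place where care is needed.
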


In the case of square matrices, we have 

\begin{pro}[\cite{KMS2016}, Proposition 2.2]\label{ProInjectivityOfPsi}
If $m=n$, then $\psi$ is an injective $\mathbb{R}$-algebra homomorphism. 
\end{pro}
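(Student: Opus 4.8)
The plan is to verify that $\psi$ is an $\RM$-algebra homomorphism and then establish injectivity, treating the two claims separately. The $\RM$-linearity of $\psi$ is immediate from the definition, since the simplex/perplex decomposition ${\bf M}={\bf A}+j{\bf B}$ depends $\RM$-linearly on ${\bf M}$ (the blocks ${\bf A}$, ${\bf B}$, $\overline{\bf A}$, $\overline{\bf B}$ are all $\RM$-linear in ${\bf M}$). Multiplicativity, i.e.\ $\psi({\bf M}{\bf N})=\psi({\bf M})\psi({\bf N})$ for ${\bf M},{\bf N}\in\Mat(n,\HM)$, is exactly the content of Lemma \ref{PsiLem} specialized to the square case $m=n$; and $\psi$ sends the identity ${\bf I}_n\in\Mat(n,\HM)$ to ${\bf I}_{2n}$, since its simplex part is ${\bf I}_n$ and its perplex part is $0$. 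Hence $\psi$ is a unital $\RM$-algebra homomorphism.

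For injectivity, I would argue directly: suppose $\psi({\bf M})=0$. Writing ${\bf M}={\bf A}+j{\bf B}$, the block form $\begin{pmatrix}{\bf A}&-\overline{\bf B}\\ {\bf B}&\overline{\bf A}\end{pmatrix}=0$ forces ${\bf A}=0$ and ${\bf B}=0$ (reading off the first column of blocks suffices), so ${\bf M}=0$. Thus $\ker\psi=\{0\}$, which gives injectivity. Equivalently, since $\psi$ is already known to be $\RM$-linear, it is enough to check that the only preimage of the zero matrix is the zero matrix, which the block structure makes transparent.

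There is essentially no obstacle here: the statement is a bookkeeping consequence of Lemma \ref{PsiLem} together with the explicit definition of $\psi$. The only point requiring a word of care is why multiplicativity needs $m=n$ — for rectangular matrices $\psi({\bf M})$ and $\psi({\bf N})$ need not be composable, whereas in the square case $\psi$ maps into the single ring $\Mat(2n,\CM)$, so "homomorphism" makes sense. One could alternatively note the abstract reason $\psi$ is injective: $\HM$ is a simple $\RM$-algebra, so any nonzero $\RM$-algebra map out of $\Mat(n,\HM)$ (itself simple) must be injective; but the direct block computation is shorter and self-contained.
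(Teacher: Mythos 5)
Your proof is correct: linearity and the block computation for injectivity are immediate from the definition of $\psi$, multiplicativity is Lemma \ref{PsiLem} with $m=n$, and $\psi({\bf I}_n)={\bf I}_{2n}$ is clear. The paper states this proposition as a cited result from \cite{KMS2016} without reproducing a proof, and your argument is the standard one that reference gives, so there is nothing further to compare.
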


For a quaternionic square matrix ${\bf M}=({\bf M}_{rs}){\;\in\;}\Mat(n,\HM)$, 
the {\it conjugate} ${\bf M}^*=(({\bf M}^*)_{rs})$ is defined by 
$({\bf M}^*)_{rs}=({\bf M}_{sr})^*$. $\psi({\bf M}^*)=\psi({\bf M})^*$ holds, where the right-hand side denotes the conjugate 
transpose of the complex matrix. 
A quaternionic square matrix ${\bf M}$ is said to be {\it unitary} if 
${\bf M}^*{\bf M}={\bf M}{\bf M}^*={\bf I}$. 
By Proposition \ref{ProInjectivityOfPsi}, ${\bf M}^*{\bf M}={\bf I}$ implies ${\bf M}{\bf M}^*={\bf I}$. 

We consider $\HM^n$ as a right $\HM$-vector space. 
${\bf v}_1,\hdots,{\bf v}_m{\;\in\;}\HM^n$ are said to be {\it $\HM$-linearly independent} 
if ${\bf v}_1a_1+{\cdots}+{\bf v}_ma_m={\bf 0}$ for $a_1,{\hdots},a_m{\;\in\;}\HM$ 
implies $a_1={\cdots}=a_m=0$. 
We shall show a useful criterion for quaternionic vectors to be $\HM$-linearly independent, which is a slightly different version of \cite{Rodman2014}, Proposition 3.4.2. 

\begin{pro}[\cite{Rodman2014}, Proposition 3.4.2]\label{ProHLinearIndep}
${\bf v}_1,\hdots,{\bf v}_m{\;\in\;}\HM^n$ are $\HM$-linearly independent if and only if 
the columns of $\psi(\begin{pmatrix} {\bf v}_1 & \cdots & {\bf v}_m \end{pmatrix})$ 
are linearly independent over $\CM$.   
\end{pro}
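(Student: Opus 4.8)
The plan is to reduce the quaternionic linear independence statement to a complex linear algebra statement by exploiting the multiplicativity of $\psi$ (Lemma \ref{PsiLem}) together with its injectivity as an $\RM$-algebra homomorphism on square matrices (Proposition \ref{ProInjectivityOfPsi}). First I would introduce the quaternionic matrix ${\bf V}=\begin{pmatrix}{\bf v}_1&\cdots&{\bf v}_m\end{pmatrix}{\;\in\;}\Mat(n{\times}m,\HM)$ so that an $\HM$-linear relation ${\bf v}_1a_1+\cdots+{\bf v}_ma_m={\bf 0}$ becomes ${\bf V}{\bf a}={\bf 0}$ for the column vector ${\bf a}={}^t(a_1,\ldots,a_m){\;\in\;}\HM^m$. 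The $\HM$-linear independence of the ${\bf v}_i$ is exactly the statement that ${\bf V}{\bf a}={\bf 0}$ forces ${\bf a}={\bf 0}$.

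Next I would apply $\psi$ to the identity ${\bf V}{\bf a}={\bf 0}$. By Lemma \ref{PsiLem}, $\psi({\bf V}{\bf a})=\psi({\bf V})\psi({\bf a})$, and since $\psi$ sends the zero matrix to the zero matrix, we get $\psi({\bf V})\psi({\bf a})={\bf 0}$ in $\Mat(2n{\times}2,\CM)$. Here $\psi({\bf a})=\begin{pmatrix}{\bf a}_1&-\overline{{\bf a}_2}\\{\bf a}_2&\overline{{\bf a}_1}\end{pmatrix}$ where ${\bf a}={\bf a}_1+j{\bf a}_2$ with ${\bf a}_1,{\bf a}_2{\;\in\;}\Mat(m{\times}1,\CM)$; its two columns are $\begin{pmatrix}{\bf a}_1\\{\bf a}_2\end{pmatrix}$ and $\begin{pmatrix}-\overline{{\bf a}_2}\\\overline{{\bf a}_1}\end{pmatrix}$. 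So $\psi({\bf V})\psi({\bf a})={\bf 0}$ says that both of these $2m$-dimensional complex vectors lie in the kernel of the complex matrix $\psi({\bf V})$.

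Now I would argue the two directions. For the ``if'' direction: suppose the columns of $\psi({\bf V})$ are $\CM$-linearly independent, i.e. $\psi({\bf V})$ has trivial kernel. Given an $\HM$-relation ${\bf V}{\bf a}={\bf 0}$, the vector $\begin{pmatrix}{\bf a}_1\\{\bf a}_2\end{pmatrix}$ is in $\ker\psi({\bf V})=\{{\bf 0}\}$, hence ${\bf a}_1={\bf a}_2={\bf 0}$, so ${\bf a}={\bf 0}$; thus the ${\bf v}_i$ are $\HM$-linearly independent. For the ``only if'' direction: suppose the ${\bf v}_i$ are $\HM$-linearly independent; I want to show $\ker\psi({\bf V})=\{{\bf 0}\}$. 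Take any ${\bf w}=\begin{pmatrix}{\bf w}_1\\{\bf w}_2\end{pmatrix}{\;\in\;}\ker\psi({\bf V})$ with ${\bf w}_1,{\bf w}_2{\;\in\;}\Mat(m{\times}1,\CM)$. The key observation is that $\ker\psi({\bf V})$ is stable under the complex-linear map ${\bf w}\mapsto\begin{pmatrix}-\overline{{\bf w}_2}\\\overline{{\bf w}_1}\end{pmatrix}$: this follows from the structural identity $\psi({\bf V})\begin{pmatrix}-\overline{{\bf w}_2}\\\overline{{\bf w}_1}\end{pmatrix}=\overline{J}\,\overline{\psi({\bf V})\begin{pmatrix}{\bf w}_1\\{\bf w}_2\end{pmatrix}}$ for a suitable permutation-sign matrix $J$, or more simply from observing that the two columns of $\psi({\bf a})$ always enjoy this relationship for ${\bf a}=({\bf w}_1)+j({\bf w}_2)$ and that $\psi({\bf V})\psi({\bf a})$ has both columns zero as soon as its first column is. Concretely: set $a_r=({\bf w}_1)_r+j({\bf w}_2)_r{\;\in\;}\HM$ and ${\bf a}={}^t(a_1,\ldots,a_m)$; then $\psi({\bf a})$ has first column ${\bf w}$, so $\psi({\bf V}{\bf a})=\psi({\bf V})\psi({\bf a})$ has first column $\psi({\bf V}){\bf w}={\bf 0}$. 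Since for a quaternionic matrix $\psi({\bf M})$ the first $n$ rows and the last $n$ rows of a given column are complex-conjugate-entangled with those of the partner column, $\psi({\bf V}{\bf a})$ having a zero first column forces it to be the zero matrix; by injectivity of $\psi$ (Proposition \ref{ProInjectivityOfPsi}) this gives ${\bf V}{\bf a}={\bf 0}$, whence ${\bf a}={\bf 0}$ by $\HM$-linear independence, hence ${\bf w}={\bf 0}$. Therefore $\ker\psi({\bf V})=\{{\bf 0}\}$.

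The main obstacle is the ``only if'' direction: a single $\HM$-relation produces a priori only \emph{one} complex vector in $\ker\psi({\bf V})$ (namely $\begin{pmatrix}{\bf a}_1\\{\bf a}_2\end{pmatrix}$), whereas showing $\psi({\bf V})$ has trivial kernel requires controlling \emph{arbitrary} complex null vectors of $\psi({\bf V})$. The resolution, as above, is the symmetry of $\psi$: every complex column vector ${\bf w}{\;\in\;}\CM^{2m}$ is the first column of $\psi({\bf a})$ for a unique ${\bf a}{\;\in\;}\HM^m$, and $\psi$ being a genuine algebra homomorphism on square matrices means $\psi({\bf V}){\bf w}={\bf 0}$ propagates to $\psi({\bf V}{\bf a})={\bf 0}$ and then to ${\bf V}{\bf a}={\bf 0}$. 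One should double-check the bookkeeping that a zero first column of the $2\times 2$-block matrix $\psi({\bf V}{\bf a})$ (more precisely, the $2n\times 2$ matrix obtained by viewing ${\bf V}{\bf a}$ as a quaternionic column) indeed forces the whole matrix to vanish — this is immediate from the explicit form $\psi({\bf M})=\begin{pmatrix}{\bf A}&-\overline{\bf B}\\{\bf B}&\overline{\bf A}\end{pmatrix}$, since a zero first block-column means ${\bf A}={\bf 0}$ and ${\bf B}={\bf 0}$.
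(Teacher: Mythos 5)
Your proof is correct and follows essentially the same route as the paper: both directions hinge on identifying a complex null vector $\begin{pmatrix}{\bf w}_1\\ {\bf w}_2\end{pmatrix}$ of $\psi({\bf V})$ with the quaternionic coefficient vector ${\bf a}={\bf w}_1+j{\bf w}_2$ and exploiting the block structure and multiplicativity of $\psi$. The only difference is cosmetic: where the paper recombines the two block equations by hand using $\overline{x}j=jx$, you reach the same relation ${\bf V}{\bf a}={\bf 0}$ by noting that ${\bf w}$ is the first column of $\psi({\bf a})$ and that a $\psi$-image with vanishing first block-column must vanish entirely.
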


\begin{proof}
Suppose $\psi(\begin{pmatrix} {\bf v}_1 & \cdots & {\bf v}_m \end{pmatrix})$ 
are linearly independent over $\CM$. 
If ${\bf v}_1a_1+{\cdots}+{\bf v}_ma_m={\bf 0}$, then 
\[
\begin{pmatrix} {\bf v}_1 & \cdots & {\bf v}_m \end{pmatrix}
\begin{pmatrix} a_1 \\ \vdots \\ a_m \end{pmatrix}
=\begin{pmatrix} 0 \\ \vdots \\ 0 \end{pmatrix}.
\]
Applying $\psi$ to both sides, we have 
\[
\psi(\begin{pmatrix} {\bf v}_1 & \cdots & {\bf v}_m \end{pmatrix})
\begin{pmatrix} a_1^S & -\overline{a_1^P} \\ \vdots & \vdots \\ a_m^S & -\overline{a_m^P} \\ 
a_1^P & \overline{a_1^S} \\ \vdots & \vdots \\ a_m^P & \overline{a_m^S}\end{pmatrix}
=\begin{pmatrix} 0 & 0 \\ \vdots & \vdots \\ 0 & 0 \\ 0 & 0 \\ \vdots & \vdots \\ 0 & 0 \end{pmatrix},
\]
where $a_r=a_r^S+ja_r^P$ is the symplectic decomposition of $a_r$. 
Since columns of the matrix $\psi(\begin{pmatrix} {\bf v}_1 & \cdots & {\bf v}_m \end{pmatrix})$ are 
linearly independent, $a_1^S=a_1^P={\cdots}=a_m^S=a_m^P=0$ and thus we obtain $a_1={\cdots}=a_m=0$. 

Conversely, suppose ${\bf v}_1,\hdots,{\bf v}_m{\;\in\;}\HM^n$ are $\HM$-linearly independent. 
If 
\[
\psi(\begin{pmatrix} {\bf v}_1 & \cdots & {\bf v}_m \end{pmatrix})
\begin{pmatrix} b_1 \\ \vdots \\ b_m \\ 
c_1 \\ \vdots \\ c_m \end{pmatrix}
=\begin{pmatrix} 0 \\ \vdots \\ 0 \\ 0 \\ \vdots \\ 0 \end{pmatrix},
\]
then, letting ${\bf v}_r={\bf v}_r^S+j{\bf v}_r^P$ be the symplectic decomposition, 
we have 
\begin{equation*}
\begin{split}
&{\bf v}_1^Sb_1+{\cdots}+{\bf v}_m^Sb_m
-\overline{{\bf v}_1^P}c_1-{\cdots}-\overline{{\bf v}_m^P}c_m={\bf 0},\\
&{\bf v}_1^Pb_1+{\cdots}+{\bf v}_m^Pb_m
+\overline{{\bf v}_1^S}c_1+{\cdots}+\overline{{\bf v}_m^S}c_m={\bf 0},
\end{split}
\end{equation*}
whence it follows that 
\begin{equation}\label{EqnLinearlyIndep}
\begin{split}
&{\bf v}_1^Sb_1+{\cdots}+{\bf v}_m^Sb_m
-\overline{{\bf v}_1^P}c_1-{\cdots}-\overline{{\bf v}_m^P}c_m={\bf 0},\\
&\overline{{\bf v}_1^P}jb_1+{\cdots}+\overline{{\bf v}_m^P}jb_m
+{\bf v}_1^Sjc_1+{\cdots}+{\bf v}_m^Sjc_m={\bf 0},
\end{split}
\end{equation}
since $\overline{x}j=jx$ for every $x{\;\in\;}\CM$. 
Summing up both sides of (\ref{EqnLinearlyIndep}), we obtain 
\[
{\bf v}_1(b_1+jc_1)+{\cdots}+{\bf v}_m(b_m+jc_m)={\bf 0}.
\]
Since ${\bf v}_1,{\hdots},{\bf v}_m$ are $\HM$-linearly independent, 
$b_1={\cdots}=b_m=c_1={\cdots}=c_m=0$ holds. 
\end{proof}

For a quaternionic square matrix ${\bf M}=({\bf M}_{rs}){\;\in\;}\Mat(n,\HM)$, 
a quaternion $\lambda$ and a nonzero vector ${\bf v}{\;\in\;}\HM^n$ are said to be 
a {\it right eigenvalue} of ${\bf M}$ and a {\it right eigenvector} 
corresponding to $\lambda$, respectively if ${\bf Mv}={\bf v}\lambda$. 
The set of all right eigenvalues of ${\bf M}$ is denoted by $\sigma_r({\bf M})$. 
Since ${\bf M}({\bf v}x)={\bf v}{\lambda}x={\bf v}x(x^{-1}{\lambda}x)$ 
for any $x{\;\in\;}\mathbb{H}^*=\mathbb{H}-\{0\}$, the conjugate class 
$\lambda^{\mathbb{H}^*}=\{x^{-1}{\lambda}x\;|\;x{\;\in\;}\mathbb{H}^*\}$ is contained in $\sigma_r({\bf M})$ and ${\bf v}x$ is a right eigenvector corresponding to the eigenvalue 
$x^{-1}{\lambda}x$. 
It is known that 
right eigenvalues of a square quaternionic matrix are given by the following 
manner. The detail of its proof can be found in, for example \cite{KMS2016}.  

\begin{thm}[\cite{KMS2016}, Theorem 2.8]\label{QuatEigen}
For any ${\bf M}{\;\in\;}\Mat(n,\mathbb{H})$, 
there exist $2n$ complex right eigenvalues $\lambda_1,{\ldots},\lambda_n,$ 
$\overline{\lambda_1},{\ldots},\overline{\lambda_n}$ of ${\bf M}$ 
counted with multiplicity,
which can be calculated by solving $\det({\lambda}{\bf I}_{2n}-\psi({\bf M}))=0$. 
The set of right eigenvalues $\sigma_r({\bf M})$ is given by 
$\sigma_r({\bf M})=\lambda_1^{\mathbb{H}^*}{\cup}{\;\cdots\;}{\cup}\lambda_n^{\mathbb{H}^*}$, 
where $\lambda^{\mathbb{H}^*}=\{x^{-1}{\lambda}x\;|\;x{\;\in\;}\mathbb{H}^*\}$ is the 
set of all the conjugations of $\lambda$ by nonzero quaternions. 
\end{thm}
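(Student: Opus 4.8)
The plan is to transfer the right-eigenvalue problem for ${\bf M}$ over $\mathbb{H}$ into an ordinary complex eigenvalue problem for $\psi({\bf M}) \in \Mat(2n,\mathbb{C})$, and then transport the answer back to $\mathbb{H}$ along conjugacy classes of quaternions.

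First I would set up the dictionary between complex right eigenvalues of ${\bf M}$ and eigenvalues of $\psi({\bf M})$. For ${\bf v} \in \mathbb{H}^n$ written in symplectic form ${\bf v} = {\bf v}^S + j{\bf v}^P$ with ${\bf v}^S,{\bf v}^P \in \mathbb{C}^n$, associate the column ${\bf w} = \begin{pmatrix} {\bf v}^S \\ {\bf v}^P \end{pmatrix} \in \mathbb{C}^{2n}$; this is an $\mathbb{R}$-linear bijection $\mathbb{H}^n \to \mathbb{C}^{2n}$ taking nonzero vectors to nonzero vectors. Writing ${\bf M} = {\bf A} + j{\bf B}$ and using $zj = j\overline z$ for $z \in \mathbb{C}$, one computes $({\bf A}+j{\bf B})({\bf v}^S+j{\bf v}^P) = ({\bf A}{\bf v}^S - \overline{{\bf B}}\,{\bf v}^P) + j({\bf B}{\bf v}^S + \overline{{\bf A}}\,{\bf v}^P)$, while for $\lambda \in \mathbb{C}$ one has ${\bf v}\lambda = \lambda{\bf v}^S + j(\lambda{\bf v}^P)$. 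Equating simplex and perplex parts shows that ${\bf M}{\bf v} = {\bf v}\lambda$ holds if and only if $\psi({\bf M}){\bf w} = \lambda{\bf w}$. (Conceptually this is just the multiplicativity of $\psi$, Lemma \ref{PsiLem}, applied to ${\bf M}{\bf v} = {\bf v}\lambda$ together with $\psi(\lambda) = \begin{pmatrix} \lambda & 0 \\ 0 & \overline\lambda \end{pmatrix}$; its injectivity, Proposition \ref{ProInjectivityOfPsi}, gives the converse direction cleanly.) Hence a complex number is a right eigenvalue of ${\bf M}$ precisely when it is an eigenvalue of $\psi({\bf M})$, i.e. a root of $\det(\lambda{\bf I}_{2n} - \psi({\bf M}))$.

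Next I would record the conjugation symmetry of $\psi({\bf M})$. With $J = \begin{pmatrix} 0 & -{\bf I}_n \\ {\bf I}_n & 0 \end{pmatrix}$, a direct block computation gives $J\,\psi({\bf M})\,J^{-1} = \overline{\psi({\bf M})}$, so $\psi({\bf M})$ is similar to $\overline{\psi({\bf M})}$, and therefore $\det(\lambda{\bf I}_{2n} - \psi({\bf M}))$ has real coefficients and its $2n$ roots are invariant, with multiplicity, under $\lambda \mapsto \overline\lambda$. To be entitled to list them as $\lambda_1,\ldots,\lambda_n,\overline{\lambda_1},\ldots,\overline{\lambda_n}$ one must also check that real roots occur with even multiplicity: the antilinear map ${\bf w} \mapsto J\overline{{\bf w}}$ commutes with $\psi({\bf M})$ and squares to $-{\bf I}_{2n}$, hence it restricts to an antilinear map with square $-{\bf I}$ on the generalized eigenspace of any real eigenvalue, which forces that eigenspace to have even complex dimension. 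This establishes the first assertion of the theorem.

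Finally I would pass back to $\mathbb{H}$. Each quaternion is conjugate in $\mathbb{H}$ to a complex number, its conjugacy class being determined by its real part and its norm and meeting $\mathbb{C}$ exactly in a pair $\{a+bi,\ a-bi\}$; and, by the remark preceding the statement, if $\lambda \in \sigma_r({\bf M})$ with eigenvector ${\bf v}$ then $x^{-1}\lambda x \in \sigma_r({\bf M})$ with eigenvector ${\bf v}x$ for every $x \in \mathbb{H}^*$, so $\sigma_r({\bf M})$ is a union of conjugacy classes. A class lies in $\sigma_r({\bf M})$ if and only if the complex number it contains is a right eigenvalue of ${\bf M}$, if and only if (by the dictionary) that number occurs among $\lambda_1,\ldots,\lambda_n,\overline{\lambda_1},\ldots,\overline{\lambda_n}$; since $\overline{\lambda_i} \in \lambda_i^{\mathbb{H}^*}$, this is equivalent to the class being one of $\lambda_1^{\mathbb{H}^*},\ldots,\lambda_n^{\mathbb{H}^*}$. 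Hence $\sigma_r({\bf M}) = \lambda_1^{\mathbb{H}^*} \cup \cdots \cup \lambda_n^{\mathbb{H}^*}$. I expect the main difficulty to be organizational rather than deep: keeping the symplectic decomposition correctly aligned with the block structure of $\psi$ throughout the dictionary step, and not overlooking the even-multiplicity point for real eigenvalues of $\psi({\bf M})$, which is exactly what makes the stated pairing of the $2n$ complex eigenvalues legitimate; everything else reduces to multiplicativity and injectivity of $\psi$ and the elementary description of quaternionic conjugacy classes.
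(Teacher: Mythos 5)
Your proposal is correct and follows the same route the paper relies on: the paper itself defers the proof to \cite{KMS2016} and only records the supporting machinery (the dictionary of Proposition \ref{EigenCorres} between ${\bf M}{\bf v}={\bf v}\lambda$ and $\psi({\bf M}){\bf w}=\lambda{\bf w}$, plus the conjugacy-class invariance of $\sigma_r({\bf M})$), which is exactly what you assemble. Your treatment of the one genuinely delicate point --- that real eigenvalues of $\psi({\bf M})$ have even multiplicity, via the antilinear map ${\bf w}\mapsto J\overline{{\bf w}}$ commuting with $\psi({\bf M})$ and squaring to $-{\bf I}_{2n}$ --- is sound and is precisely what legitimizes listing the $2n$ roots in conjugate pairs.
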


\begin{rem}\label{RemQuatEigen}
{\rm (1)} $\lambda_r^{\mathbb{H}^*}{\cap}\lambda_s^{\mathbb{H}^*}=\emptyset$ if 
$\lambda_r$ equals neither $\lambda_s$ nor $\overline{\lambda_s}$. \\
{\rm (2)} If ${\bf v}$ is an eigenvector corresponding to a complex right eigenvalue 
$\lambda$, then ${\bf v}j$ is an eigenvector corresponding to $\overline{\lambda}$, 
since ${\bf M}{\bf v}j={\bf v}{\lambda}j={\bf v}j\overline{\lambda}$. 
\end{rem}

Let $p^{({\bf M})}(y)$ be the real coefficient polynomial of minimal degree 
which satisfies the following conditions: 
\begin{enumerate}
\renewcommand{\labelenumi}{\rm (M\arabic{enumi})}
\item
$p^{({\bf M})}(y)$ is monic, namely, the leading coefficient equals $1$. 
\item
$p^{({\bf M})}({\bf M})={\bf O}_n$.
\end{enumerate}
Such a polynomial exists since $\Mat(n,\HM)$ is a finite dimensional vector space 
over $\RM$. 
$p^{({\bf M})}(y)$ is called the {\it minimal polynomial} of ${\bf M}$. 
As in the case of complex matrices, the minimal polynomial uniquely exists and 
$p^{({\bf M})}(y)$ divides any real coefficient polynomial $p(y)$ such that 
$p({\bf M})={\bf O}_n$. 
Let $p^{({\bf M})}(y)=p^{({\bf M})}_1(y)^{m_1}{\cdots}p^{({\bf M})}_r(y)^{m_r}$ be 
the factorization of $p^{({\bf M})}(y)$ into mutually different monic irreducible 
real coefficient polynomials $p^{({\bf M})}_1(y),{\cdots},p^{({\bf M})}_r(y)$. 
Since all coefficients are real, the factorization of $p^{({\bf M})}(y)$ yields 
only monic irreducible polynomials of degree $1$ or $2$ as factors; 
Each $p^{({\bf M})}_s({\bf M})$ has either 
one real root or two complex roots which are complex conjugate with each other. 
For each $s=1,2,{\ldots},r$, 
the following subspace $\mathscr{M}_{s}$ of $\HM^n$ is called 
the {\it root subspace} of ${\bf M}$: 
\begin{equation}\label{DefRootSubspace}
\mathscr{M}_{s}=\{{\bf v}{\;\in\;}\HM^n\,|\,p^{({\bf M})}_s({\bf M})^{m_s}{\bf v}={\bf o}\}.
\end{equation}
We notice that $\mathscr{M}_s$ is ${\bf M}$-invariant, namely, 
${\bf M}\mathscr{M}_s{\,\subseteq\,}\mathscr{M}_s$ since 
\[
p^{({\bf M})}_s({\bf M})^{m_s}{\bf M}={\bf M}p^{({\bf M})}_s({\bf M})^{m_s} 
\]
holds. It is known that $\HM^n$ is the direct sum of root spaces as follows: 

\begin{pro}[\cite{Rodman2014}, Proposition 5.1.4]
Let $p^{({\bf M})}(y)$ be the minimal polynomial of the matrix ${\bf M}{\;\in\;}\Mat(n,\HM)$. 
If $p^{({\bf M})}(y)=p^{({\bf M})}_1(y)^{m_1}{\cdots}p^{({\bf M})}_r(y)^{m_r}$ is the 
factorization of $p^{({\bf M})}(y)$ into mutually different monic irreducible 
real coefficient polynomials, then 
\[
\HM^n=\bigoplus_{s=1}^r\mathscr{M}_s,
\]
where $\mathscr{M}_s$ is the root subspace of ${\bf M}$. 
\end{pro}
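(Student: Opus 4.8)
The plan is to reproduce the classical primary decomposition (Chinese Remainder) argument, the crucial point being that every operator occurring in it is left multiplication by a matrix that is a polynomial in ${\bf M}$ with \emph{real} coefficients. Since $\RM$ lies in the centre of $\HM$, such operators are automatically $\HM$-linear for the right module structure on $\HM^n$, commute with ${\bf M}$, and commute with one another; moreover evaluation $f(y)\mapsto f({\bf M})$ is a ring homomorphism $\RM[y]\to\Mat(n,\HM)$ with commutative image, so every polynomial identity over $\RM$ transfers to one in ${\bf M}$. In particular each $\mathscr{M}_s$ is a right $\HM$-subspace of $\HM^n$, since $p^{({\bf M})}_s({\bf M})^{m_s}({\bf v}a)=\bigl(p^{({\bf M})}_s({\bf M})^{m_s}{\bf v}\bigr)a$ for $a{\;\in\;}\HM$, and it is ${\bf M}$-invariant, as already noted in the text.

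First I would set $q_s(y)=\prod_{t\neq s}p^{({\bf M})}_t(y)^{m_t}{\;\in\;}\RM[y]$ for $s=1,\dots,r$. The $q_1,\dots,q_r$ share no irreducible factor (for a fixed $t$, the irreducible $p^{({\bf M})}_t$ is absent from $q_t$), so $\gcd(q_1,\dots,q_r)=1$; since $\RM[y]$ is a principal ideal domain there exist $a_1,\dots,a_r{\;\in\;}\RM[y]$ with $\sum_{s=1}^r a_s(y)q_s(y)=1$. Put ${\bf E}_s=a_s({\bf M})q_s({\bf M}){\;\in\;}\Mat(n,\HM)$. I would then check the three identities making the ${\bf E}_s$ a complete system of orthogonal idempotents: $\sum_s{\bf E}_s={\bf I}$, by evaluating the B\'ezout relation at ${\bf M}$; ${\bf E}_s{\bf E}_t={\bf O}_n$ for $s\neq t$, since for every index $u$ the factor $p^{({\bf M})}_u(y)^{m_u}$ divides $q_s(y)$ or $q_t(y)$, hence $p^{({\bf M})}(y){\;\mid\;}q_s(y)q_t(y)$, and then $q_s({\bf M})q_t({\bf M})={\bf O}_n$ by (M2); and ${\bf E}_s^2={\bf E}_s$, obtained by left-multiplying ${\bf I}=\sum_t{\bf E}_t$ by ${\bf E}_s$ and using the previous identity. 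It follows that $\HM^n=\bigoplus_{s=1}^r\operatorname{Im}{\bf E}_s$ as right $\HM$-modules: every ${\bf v}$ equals $\sum_s{\bf E}_s{\bf v}$, and the sum is direct because applying ${\bf E}_s$ to a relation $\sum_t{\bf v}_t={\bf o}$ with ${\bf v}_t{\;\in\;}\operatorname{Im}{\bf E}_t$ forces ${\bf v}_s={\bf o}$.

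It then remains to identify $\operatorname{Im}{\bf E}_s$ with $\mathscr{M}_s$. For $\operatorname{Im}{\bf E}_s{\;\subseteq\;}\mathscr{M}_s$: since $p^{({\bf M})}_s(y)^{m_s}q_s(y)=p^{({\bf M})}(y)$, we get $p^{({\bf M})}_s({\bf M})^{m_s}{\bf E}_s=a_s({\bf M})\,p^{({\bf M})}({\bf M})={\bf O}_n$ by (M2), so $p^{({\bf M})}_s({\bf M})^{m_s}$ annihilates $\operatorname{Im}{\bf E}_s$. For the reverse inclusion: if ${\bf v}{\;\in\;}\mathscr{M}_s$, then for each $t\neq s$ the factor $p^{({\bf M})}_s(y)^{m_s}$ divides $q_t(y)$, so $q_t({\bf M}){\bf v}={\bf o}$, whence ${\bf E}_t{\bf v}={\bf o}$; therefore ${\bf v}=\sum_t{\bf E}_t{\bf v}={\bf E}_s{\bf v}{\;\in\;}\operatorname{Im}{\bf E}_s$. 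Thus $\mathscr{M}_s=\operatorname{Im}{\bf E}_s$, and combining with the previous paragraph gives $\HM^n=\bigoplus_{s=1}^r\mathscr{M}_s$.

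I do not expect a genuine obstacle: the whole argument rests on B\'ezout's identity in the principal ideal domain $\RM[y]$ and on the identity (M2) already granted before the statement, and the noncommutativity of $\HM$ never interferes precisely because all polynomials in play have central (real) coefficients. The only points worth an explicit line are that each ${\bf E}_s$ is right $\HM$-linear — immediate, being left multiplication by a matrix — and that divisibility of a real polynomial by $p^{({\bf M})}(y)$ forces the corresponding matrix value to vanish, which is just the homomorphism property of $f\mapsto f({\bf M})$ together with (M2).
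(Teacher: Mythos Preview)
Your argument is correct: it is the standard primary-decomposition proof via B\'ezout's identity in $\RM[y]$, and you have correctly isolated the one point where the quaternionic setting requires care, namely that all polynomials involved have real (hence central) coefficients, so that evaluation at ${\bf M}$ is a ring homomorphism into a commutative subring of $\Mat(n,\HM)$ and the resulting operators are right $\HM$-linear.

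There is nothing to compare against in the paper itself: the proposition is quoted from Rodman's book \cite{Rodman2014} without proof, introduced only by the phrase ``It is known that $\HM^n$ is the direct sum of root spaces as follows.'' Your write-up therefore supplies what the paper omits. The approach is exactly the one in Rodman's text (and in any standard linear-algebra reference over a field), so no alternative route is in play.
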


Since $p^{({\bf M})}({\bf M})={\bf O}_n$ is equivalent to $p^{({\bf M})}(\psi({\bf M}))={\bf O}_{2n}$, 
$p^{({\bf M})}(y)$ is the minimal polynomial of $\psi({\bf M})$. 
Thus, $p^{({\bf M})}(y)$ divides $\det(y{\bf I}_{2n}-\psi({\bf M}))$. 
From Theorem \ref{QuatEigen}, we have
\begin{equation}\label{EqnFactCharacteristicEqn}
\begin{split}
\det(y{\bf I}_{2n}-\psi({\bf M}))&=(y-\lambda_1)(y-\overline{\lambda}_1)
{\cdots}(y-\lambda_n)(y-\overline{\lambda}_n)\\
&=(y-2{\RP}\lambda_1+|\lambda_1|^2){\cdots}(y-2{\RP}\lambda_n+|\lambda_n|^2).
\end{split}
\end{equation}
Each eigenvalue $\lambda$ of $\psi({\bf M})$ is a root of some factor 
$p^{({\bf M})}_s(y)$ of $p^{({\bf M})}(y)$, and the right eigenvector ${\bf v}$ of ${\bf M}$ in $\HM^n$ 
corresponding to $\lambda$ belongs to the root subspace $\mathscr{M}_s$. 
For every $x{\;\in\;}\HM^*$, ${\bf v}x$ also belongs to $\mathscr{M}_s$. 
Thus, for every $\lambda'{\;\in\;}\lambda^{\HM^*}$, all eigenvectors corresponding to $\lambda'$ 
belong to $\mathscr{M}_s$. Especially, eigenvectors corresponding to a complex right eigenvalue 
and its complex conjugate belong to the same root subspace since $-jij=-i$. 
The maximal number of $\HM$-linearly independent right eigenvectors of 
a root subspace $\mathscr{M}$ of a quaternionic square matrix 
${\bf M}{\;\in\;}\Mat(n,\HM)$ is called the {\it geometric multiplicity} 
of $\mathscr{M}$. 
For a right eigenvalue $\lambda{\;\in\;}\sigma_r({\bf M})$, 
the geometric multiplicity of the root subspace corresponding to the 
polynomial of which $\lambda$ is a root is called the geometric multiplicity 
of $\lambda$. We notice the geometric multiplicity of a right eigenvalue is well-defined 
because of Theorem \ref{QuatEigen}, Remark \ref{RemQuatEigen} (1) and the observation 
in the previous paragraph. 
On the other hand, the quaternionic dimension of $\mathscr{M}$ is called the {\it algebraic multiplicity} 
of $\mathscr{M}$. 
For a right eigenvalue $\lambda{\;\in\;}\sigma_r({\bf M})$, 
the algebraic multiplicity of the root subspace corresponding to the 
polynomial of which $\lambda$ is a root is called the algebraic multiplicity 
of $\lambda$. 

We also notice that one can compute a right eigenvector corresponding to a complex eigenvalue 
$\lambda$ of a quaternionic square matrix as follows. 

\begin{pro}[\cite{KMS2016}, Proposition 2.5]\label{EigenCorres}
Let ${\bf M}{\;\in\;}\Mat(n,\HM)$ and $\lambda{\;\in\;}\mathbb{C}$. \\Then 
\begin{equation*}
{\bf M}{\bf v}={\bf v}\lambda{\;\Leftrightarrow\;}\psi({\bf M})
\begin{pmatrix} {\bf u} \\ {\bf w} \end{pmatrix}
=\begin{pmatrix} {\bf u} \\ {\bf w} \end{pmatrix}\lambda
\qquad \text{where\quad} {\bf v}={\bf u}+j{\bf w}\;
({\bf u},{\bf w}{\;\in\;}\mathbb{C}^n).
\end{equation*}
\end{pro}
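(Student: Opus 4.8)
The plan is to verify the equivalence by a direct computation in symplectic coordinates. Write ${\bf M}={\bf A}+j{\bf B}$ with ${\bf A},{\bf B}{\;\in\;}\Mat(n,\CM)$, so that $\psi({\bf M})=\begin{pmatrix}{\bf A}&-\overline{\bf B}\\{\bf B}&\overline{\bf A}\end{pmatrix}$, and recall ${\bf v}={\bf u}+j{\bf w}$ with ${\bf u},{\bf w}{\;\in\;}\CM^n$. The only algebraic facts needed are the commutation rules $zj=j\overline{z}$ and $\overline{z}j=jz$ for $z{\;\in\;}\CM$, applied entrywise: since $\HM^n$ is a \emph{right} $\HM$-vector space and complex matrices act on the left, they yield ${\bf C}(j{\bf x})=j(\overline{\bf C}{\bf x})$ for every ${\bf C}{\;\in\;}\Mat(n,\CM)$ and ${\bf x}{\;\in\;}\CM^n$.

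Using this, I would expand
\[
{\bf M}{\bf v}=({\bf A}+j{\bf B})({\bf u}+j{\bf w})={\bf A}{\bf u}+{\bf A}(j{\bf w})+j({\bf B}{\bf u})+j{\bf B}(j{\bf w})=\bigl({\bf A}{\bf u}-\overline{\bf B}{\bf w}\bigr)+j\bigl({\bf B}{\bf u}+\overline{\bf A}{\bf w}\bigr),
\]
since ${\bf A}(j{\bf w})=j(\overline{\bf A}{\bf w})$ and $j{\bf B}(j{\bf w})=j^2(\overline{\bf B}{\bf w})=-\overline{\bf B}{\bf w}$. On the other side, because $\lambda{\;\in\;}\CM$ and multiplication by $\lambda$ is on the right, ${\bf v}\lambda={\bf u}\lambda+j({\bf w}\lambda)$, with simplex part ${\bf u}\lambda$ and perplex part ${\bf w}\lambda$. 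By uniqueness of the decomposition of a quaternionic vector into simplex and perplex parts, ${\bf M}{\bf v}={\bf v}\lambda$ is therefore equivalent to the pair of complex identities ${\bf A}{\bf u}-\overline{\bf B}{\bf w}={\bf u}\lambda$ and ${\bf B}{\bf u}+\overline{\bf A}{\bf w}={\bf w}\lambda$, which is precisely
\[
\begin{pmatrix}{\bf A}&-\overline{\bf B}\\{\bf B}&\overline{\bf A}\end{pmatrix}\begin{pmatrix}{\bf u}\\{\bf w}\end{pmatrix}=\begin{pmatrix}{\bf u}\\{\bf w}\end{pmatrix}\lambda,
\]
that is, $\psi({\bf M})\begin{pmatrix}{\bf u}\\{\bf w}\end{pmatrix}=\begin{pmatrix}{\bf u}\\{\bf w}\end{pmatrix}\lambda$. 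Both implications are read off from this single chain of equivalences.

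As an alternative for the forward direction, one can instead invoke multiplicativity of $\psi$ (Lemma \ref{PsiLem}): regarding ${\bf v}$ as a column in $\Mat(n{\times}1,\HM)$ and $\lambda$ as an element of $\Mat(1,\HM)$, applying $\psi$ to ${\bf M}{\bf v}={\bf v}\lambda$ gives $\psi({\bf M})\psi({\bf v})=\psi({\bf v})\psi(\lambda)$, and the first column of this matrix identity is exactly the stated complex eigenvalue equation; the converse then requires the conjugate symmetry of matrices of the form $\begin{pmatrix}{\bf A}&-\overline{\bf B}\\{\bf B}&\overline{\bf A}\end{pmatrix}$, namely that $\psi({\bf M})\begin{pmatrix}{\bf x}\\{\bf y}\end{pmatrix}=\mu\begin{pmatrix}{\bf x}\\{\bf y}\end{pmatrix}$ forces $\psi({\bf M})\begin{pmatrix}-\overline{\bf y}\\\overline{\bf x}\end{pmatrix}=\overline{\mu}\begin{pmatrix}-\overline{\bf y}\\\overline{\bf x}\end{pmatrix}$. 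I would prefer the direct computation, which settles both directions simultaneously. No step here is a genuine obstacle; the only thing requiring care is bookkeeping of left versus right multiplication — the module is a right $\HM$-vector space, so $\lambda$ stays on the right throughout — together with consistent use of the commutation rules for $j$.
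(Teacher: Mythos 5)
Your proof is correct. The paper states this proposition only as a citation to \cite{KMS2016} and includes no proof of its own, but your direct computation in symplectic coordinates is the standard argument: the commutation rule ${\bf C}(j{\bf x})=j(\overline{\bf C}{\bf x})$ and the uniqueness of the decomposition into simplex and perplex parts are exactly what is needed, and they make the argument a genuine two-way chain of equivalences.
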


Here, we shall give some examples of right spectra of quaternionic matrices. For an arbitrary matrix ${\bf M}$, ${}^T\!{\bf M}$ denotes the transpose of ${\bf M}$. 

\begin{exm}
${\bf M}=\begin{pmatrix}1&0\\0&k\end{pmatrix}$\\
$\det({\lambda}{\bf I}_{4}-\psi({\bf M}))=\det\begin{pmatrix}\lambda-1&0&0&0\\0&\lambda&0&i\\
0&0&\lambda-1&0\\0&i&0&\lambda\end{pmatrix}=(\lambda-1)^2(\lambda-i)(\lambda+i)$\\
$=(\lambda-1)^2(\lambda^2+1)$. \\
$p^{({\bf M})}(\lambda)=(\lambda-1)(\lambda^2+1)$. \quad $p^{({\bf M})}_1(\lambda)=\lambda-1$,
$p^{({\bf M})}_2(\lambda)=\lambda^2+1$. \quad 
$\sigma_r({\bf M})=\{1\}{\cup}i^{\mathbb{H}^*}$.\\
$\lambda=1{\;\Rightarrow\;}
\begin{pmatrix}{\bf u}\\{\bf w}\end{pmatrix}={}^T\!\!\begin{pmatrix}1&0&0&0\end{pmatrix},\;
{}^T\!\!\begin{pmatrix}0&0&1&0\end{pmatrix}$. \quad 
${\bf v}={}^T\!\!\begin{pmatrix}1&0\end{pmatrix},\,{}^T\!\!\begin{pmatrix}j&0\end{pmatrix}$. \\
$\lambda=i{\;\Rightarrow\;}
\begin{pmatrix}{\bf u}\\{\bf w}\end{pmatrix}={}^T\!\!\begin{pmatrix}0&1&0&-1\end{pmatrix}$. \quad 
${\bf v}={}^T\!\!\begin{pmatrix}0&1-j\end{pmatrix}$. \\
$\lambda=-i{\;\Rightarrow\;}
\begin{pmatrix}{\bf u}\\{\bf w}\end{pmatrix}={}^T\!\!\begin{pmatrix}0&1&0&1\end{pmatrix}$. \quad 
${\bf v}={}^T\!\!\begin{pmatrix}0&1+j\end{pmatrix}$. \\
$\HM^2=\mathscr{M}_1{\oplus}\mathscr{M}_2$, 
where $\mathscr{M}_1={}^T\!\!\begin{pmatrix}1&0\end{pmatrix}\HM,\;
\mathscr{M}_2={}^T\!\!\begin{pmatrix}0&1-j\end{pmatrix}\HM$. 
\end{exm}

\begin{exm}
${\bf M}=\begin{pmatrix}0&i\\j&0\end{pmatrix}=\begin{pmatrix}0&i\\0&0\end{pmatrix}
+j\begin{pmatrix}0&0\\1&0\end{pmatrix}$\\
$\det({\lambda}{\bf I}_{4}-\psi({\bf M}))=\det\begin{pmatrix}\lambda&-i&0&0\\0&\lambda&1&0\\
0&0&\lambda&i\\-1&0&0&\lambda\end{pmatrix}=(\lambda^2-i)(\lambda^2+i)$\\
$=\left( \lambda-\dfrac{1+i}{\sqrt{2}} \right)
\left(\lambda-\dfrac{-1-i}{\sqrt{2}} \right)\left(\lambda-\dfrac{1-i}{\sqrt{2}} \right)\left(\lambda-\dfrac{-1+i}{\sqrt{2}} \right)
=(\lambda^2-\sqrt{2}\lambda+1)(\lambda^2+\sqrt{2}\lambda+1)$. \\ 
$p^{({\bf M})}(\lambda)=(\lambda^2-\sqrt{2}\lambda+1)(\lambda^2+\sqrt{2}\lambda+1)$. \quad 
$p^{({\bf M})}_1(\lambda)=\lambda^2-\sqrt{2}\lambda+1$, 
$p^{({\bf M})}_2(\lambda)=\lambda^2+\sqrt{2}\lambda+1$.\\  
$\sigma_r({\bf M})=\Big{(}\dfrac{1+i}{\sqrt{2}}\Big{)}^{\mathbb{H}^*}{\cup}
\Big{(}\dfrac{-1-i}{\sqrt{2}}\Big{)}^{\mathbb{H}^*}$.\\
$\lambda=\dfrac{1+i}{\sqrt{2}}{\;\Rightarrow\;}
\begin{pmatrix}{\bf u}\\{\bf w}\end{pmatrix}
={}^T\!\!\begin{pmatrix}1&\dfrac{1-i}{\sqrt{2}}&-1&\dfrac{1-i}{\sqrt{2}}\end{pmatrix}$. 
${\bf v}={}^T\!\!\begin{pmatrix}1-j&\dfrac{1-i+j+k}{\sqrt{2}}\end{pmatrix}$. \\
$\lambda=\dfrac{1-i}{\sqrt{2}}{\;\Rightarrow\;}
\begin{pmatrix}{\bf u}\\{\bf w}\end{pmatrix}
={}^T\!\!\begin{pmatrix}1&-\dfrac{1+i}{\sqrt{2}}&1&\dfrac{1+i}{\sqrt{2}}\end{pmatrix}$. 
${\bf v}={}^T\!\!\begin{pmatrix}1+j&\dfrac{-1-i+j-k}{\sqrt{2}}\end{pmatrix}$. \\
$\lambda=\dfrac{-1-i}{\sqrt{2}}{\;\Rightarrow\;}
\begin{pmatrix}{\bf u}\\{\bf w}\end{pmatrix}
={}^T\!\!\begin{pmatrix}1&-\dfrac{1-i}{\sqrt{2}}&-1&-\dfrac{1-i}{\sqrt{2}}\end{pmatrix}$. 
${\bf v}={}^T\!\!\begin{pmatrix}1-j&\dfrac{-1+i-j-k}{\sqrt{2}}\end{pmatrix}$. \\
$\lambda=\dfrac{-1+i}{\sqrt{2}}{\;\Rightarrow\;}
\begin{pmatrix}{\bf u}\\{\bf w}\end{pmatrix}
={}^T\!\!\begin{pmatrix}1&\dfrac{1+i}{\sqrt{2}}&1&-\dfrac{1+i}{\sqrt{2}}\end{pmatrix}$. 
${\bf v}={}^T\!\!\begin{pmatrix}1+j&\dfrac{1+i-j+k}{\sqrt{2}}\end{pmatrix}$. \\
$\HM^2=\mathscr{M}_1{\oplus}\mathscr{M}_2$, 
where $\mathscr{M}_1={}^T\!\!\begin{pmatrix}1-j&\dfrac{1-i+j+k}{\sqrt{2}}\end{pmatrix}\HM,$ and\\
$\mathscr{M}_2={}^T\!\!\begin{pmatrix}1-j&\dfrac{-1+i-j-k}{\sqrt{2}}\end{pmatrix}\HM$. 
\end{exm}

\section{The second weighted zeta function of a graph}\label{sec:IharaZeta}
We shall review the Ihara zeta function and the second weighted zeta function of a graph in this section. 
In this section, let $G=(V(G)$, $E(G))$ be a finite connected simple graph with the vertex set $V(G)$ and 
the edge set $E(G)$ consisting of unoriented edges $uv$ joining two vertices $u$ and $v$. 
For $uv \in E(G)$, the ordered pair $(u,v)$ denotes the oriented edge from $u$ to $v$ which we call an {\em arc}. 
Let $D(G)=\{\,(u,v),\,(v,u)\,\mid\,uv{\;\in\;}E(G)\}$ and $|V(G)|=n,\;|E(G)|=m,\;|D(G)|=2m$. 
For $e=(u,v){\;\in\;}D(G)$, $o(e)=u$ denotes the {\it origin} and $t(e)=v$ the {\it terminus} 
of $e$ respectively. 
$e^{-1}=(v,u)$ is called the {\em inverse} of $e=(u,v)$. 
The {\em degree} $\deg v$ of a vertex $v$ of $G$ is the number of edges 
incident to $v$. 
A {\em path $P$ of length $\ell$} in $G$ is a sequence 
$P=(e_1, \cdots ,e_{\ell})$ of $\ell$ arcs $e_r \in D(G)\;(r=1,\ldots,\ell)$ with 
$t(e_r)=o(e_{r+1})$ for $r=1,\ldots,\ell-1$. 
$|P|$ denotes the length of $P$. 
A path $P=(e_1, \cdots ,e_{\ell})$ is said to have a {\em backtracking} 
if $ e_{r+1} =e_r^{-1} $ for some $r=1,\ldots,\ell-1$, and is said to have a {\em tail} 
if $ e_{\ell} =e_1^{-1} $. 
A path $P$ is called a {\em cycle} if $t(P)=o(P)$. 
Two cycles $C_1 =(e_1, \cdots ,e_{\ell})$ and 
$C_2 =(f_1, \cdots ,f_{\ell})$ are said to be {\em equivalent} if there exists 
$s$ such that $f_r =e_{r+s}$ for all $r$ where indices are treated modulo $\ell$. 
$[C]$ denotes the equivalence class to which $C$ belongs and 
$B^r$ is called the $r$-th {\em power} of a cycle $B$, namely, $B^r$ is obtained by going $r$ times around a cycle $B$. 
A cycle $C$ is said to be {\em reduced} if 
$C$ has neither backtracking nor tail and to be {\em prime} if it is not a power of 
a strictly smaller cycle. 
We denote by $\mathscr{PR}$ the set of all equivalence classes of prime reduced cycles of $G$. 

The {\em Ihara zeta function} of a graph $G$ is defined by 
\[
{\bf Z} (G, t)= {\bf Z}_G (t)= \prod_{[C]{\in}\mathscr{PR}} (1- t^{ \mid C \mid } )^{-1} ,
\]
where $t$ is a complex variable with $|t|$ sufficiently small. 

In order to explain determinant expressions of ${\bf Z} (G, t)$, we introduce 
two matrices ${\bf B}=({\bf B}_{ef})_{e,f \in D(G)},\,
{\bf J}_0=( {\bf J}_{ef} )_{e,f \in D(G)}{\;\in\;}\Mat(2m,\CM)$ as follows: 
\[
{\bf B}_{ef} =\left\{
\begin{array}{ll}
1 & \mbox{if $t(e)=o(f)$, } \\
0 & \mbox{otherwise,}
\end{array}
\right.
{\bf J}_{ef} =\left\{
\begin{array}{ll}
1 & \mbox{if $f= e^{-1} $, } \\
0 & \mbox{otherwise.}
\end{array}
\right.
\]
The matrix ${\bf B} - {\bf J}_0 $ is called the {\em edge (adjacency) matrix} of $G$.

\begin{thm}[Hashimoto \cite{Hashimoto1989}; Bass \cite{Bass1992}]
The reciprocal of the Ihara zeta function of $G$ is given by 
\[
{\bf Z} (G, t)^{-1} =\det ( {\bf I}_{2m} -t ( {\bf B} - {\bf J}_0 ))
=(1- t^2 )^{r-1} \det ( {\bf I}_n -t {\bf A}+ 
t^2 ({\bf D} -{\bf I}_n )), 
\]
where $r$ and ${\bf A}$ are the Betti number and the adjacency matrix 
of $G$, respectively, and ${\bf D} =({\bf D}_{uv})_{u,v{\in}V(G)}$ is the diagonal matrix 
with ${\bf D}_{uu} = \deg u $ for all $u{\;\in\;}V(G)$. 
\end{thm}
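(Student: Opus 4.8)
The plan is to establish the two displayed equalities separately.

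\emph{The first equality} ${\bf Z}(G,t)^{-1}=\det({\bf I}_{2m}-t({\bf B}-{\bf J}_0))$ I would derive by taking logarithms and turning it into a trace computation. For $|t|$ small, the Euler product gives $-\log{\bf Z}_G(t)=\sum_{[C]\in\mathscr{PR}}\log(1-t^{|C|})=-\sum_{[C]\in\mathscr{PR}}\sum_{r\ge1}r^{-1}t^{r|C|}$. I would then observe that $({\bf B}-{\bf J}_0)_{ef}=1$ exactly when $t(e)=o(f)$ and $f\ne e^{-1}$ (in the single overlapping case $f=e^{-1}$ the $+1$ from ${\bf B}$ and the $-1$ from $-{\bf J}_0$ cancel), so that $N_k:=\operatorname{tr}(({\bf B}-{\bf J}_0)^k)$ counts the closed paths of length $k$ with no backtracking and no tail — the condition forced at the wrap-around junction of the trace is precisely the absence of a tail — i.e.\ the reduced cycles of length $k$ regarded as sequences of arcs. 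Every reduced cycle of length $k$ is the $r$-th power of a unique prime reduced cycle of some length $d$ with $rd=k$, and a fixed class $[C]\in\mathscr{PR}$ of length $d$ contributes through $C^{k/d}$ exactly $d$ such sequences (its $d$ cyclic rotations are distinct because a prime cycle has no nontrivial cyclic symmetry); hence $N_k=\sum_{d\mid k}d\cdot\#\{[C]\in\mathscr{PR}:|C|=d\}$. Inserting this into $\sum_{k\ge1}k^{-1}N_kt^k$ and regrouping by $k=rd$ recovers $\sum_{k\ge1}k^{-1}N_kt^k=\sum_{[C]\in\mathscr{PR}}\sum_{r\ge1}r^{-1}t^{r|C|}=\log{\bf Z}_G(t)$, while on the other hand $\sum_{k\ge1}k^{-1}\operatorname{tr}(({\bf B}-{\bf J}_0)^k)t^k=-\log\det({\bf I}_{2m}-t({\bf B}-{\bf J}_0))$ from the identity $\log\det=\operatorname{tr}\log$ together with $\log({\bf I}_{2m}-t({\bf B}-{\bf J}_0))=-\sum_{k\ge1}k^{-1}t^k({\bf B}-{\bf J}_0)^k$. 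Comparing and exponentiating gives the first equality for $|t|$ small, hence everywhere since both sides are polynomials in $t$.

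\emph{The second equality} (Bass's formula) I would prove by a block-determinant identity in $\Mat(n+2m,\CM)$. Introduce the $n\times2m$ incidence matrices ${\bf S}=({\bf S}_{ve})$ and ${\bf T}=({\bf T}_{ve})$, with ${\bf S}_{ve}=1$ iff $v=o(e)$ and ${\bf T}_{ve}=1$ iff $v=t(e)$. A short computation yields the relations ${\bf A}={\bf S}\,{}^T\!{\bf T}$, ${\bf B}={}^T\!{\bf T}\,{\bf S}$, ${\bf D}={\bf S}\,{}^T\!{\bf S}={\bf T}\,{}^T\!{\bf T}$, and, since reversing an arc interchanges its origin and terminus, ${\bf S}{\bf J}_0={\bf T}$ and ${\bf T}{\bf J}_0={\bf S}$. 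The plan is to assemble from ${\bf I}_n$, ${\bf I}_{2m}$, $t{\bf S}$, $t\,{}^T\!{\bf S}$ (or ${}^T\!{\bf T}$) and ${\bf J}_0$ a suitable $(n+2m)\times(n+2m)$ block matrix and to evaluate its determinant in two ways, by forming the Schur complement of each of the two diagonal blocks (equivalently, by factoring it as a product of block-triangular matrices in the two possible orders). Using ${\bf B}={}^T\!{\bf T}\,{\bf S}$ and $({\bf I}_{2m}\pm t{\bf J}_0)^{-1}=(1-t^2)^{-1}({\bf I}_{2m}\mp t{\bf J}_0)$ (valid since ${\bf J}_0^2={\bf I}_{2m}$), one evaluation collapses to $(1-t^2)^{a}\det({\bf I}_{2m}-t({\bf B}-{\bf J}_0))$; using ${\bf S}{\bf J}_0={\bf T}$, ${\bf T}\,{}^T\!{\bf S}={\bf A}$ and ${\bf S}\,{}^T\!{\bf S}={\bf D}$, the other collapses to $(1-t^2)^{b}\det({\bf I}_n-t{\bf A}+t^2({\bf D}-{\bf I}_n))$, with $a-b=m-n$ dictated by the sizes of the two identity blocks. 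As $G$ is connected, $r=m-n+1$, so $a-b=r-1$; equating the two evaluations and cancelling the common power of $1-t^2$ gives the asserted identity.

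The step I expect to be the main obstacle is pinning down, in the second part, the exact auxiliary block matrix for which \emph{both} Schur-complement reductions telescope to precisely these two determinants, together with the careful accounting of signs and of the powers of $1-t^2$. The manipulations that invert ${\bf I}_{2m}\pm t{\bf J}_0$ or a diagonal block are harmless: both sides of the asserted identity being polynomials in $t$, it suffices to prove it for $t$ with $1-t^2\ne0$ lying outside a further finite set, whence it holds for all $t$ by continuity. In the first part the only delicate point is the claim that $C^{k/d}$ contributes exactly $d$ distinct cyclic sequences, which follows from the primality of $C$. An alternative to the block-matrix argument for the second equality is the combinatorial proof of Foata and Zeilberger via a sign-reversing involution on marked closed walks.
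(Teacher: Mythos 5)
The paper states this theorem as a cited classical result of Hashimoto and Bass and gives no proof, so there is no in-paper argument to measure yours against; I can only assess the proposal on its own terms, and it is correct. Part one is exactly Hashimoto's computation: the diagonal cancellation in ${\bf B}-{\bf J}_0$, the identification of $\operatorname{tr}(({\bf B}-{\bf J}_0)^k)$ with based reduced cycles of length $k$ (the wrap-around factor does enforce precisely the no-tail condition), and the count of $d$ rotations per prime class of length $d$ via primitivity are all right, and the $\log\det=\operatorname{tr}\log$ regrouping closes the argument. For part two, the block identity you were unsure about does exist and telescopes exactly as you predict:
\[
\begin{pmatrix} {\bf I}_n & {\bf O} \\ {}^T\!{\bf T} & {\bf I}_{2m} \end{pmatrix}
\begin{pmatrix} (1-t^2){\bf I}_n & t{\bf S} \\ {\bf O} & {\bf I}_{2m}-t({\bf B}-{\bf J}_0) \end{pmatrix}
=
\begin{pmatrix} {\bf I}_n-t{\bf A}+t^2({\bf D}-{\bf I}_n) & t{\bf S} \\ {\bf O} & {\bf I}_{2m}+t{\bf J}_0 \end{pmatrix}
\begin{pmatrix} {\bf I}_n & {\bf O} \\ {}^T\!{\bf T}-t\,{}^T\!{\bf S} & {\bf I}_{2m} \end{pmatrix},
\]
verified blockwise from exactly the relations you list (${}^T\!{\bf T}{\bf S}={\bf B}$, ${\bf S}\,{}^T\!{\bf T}={\bf A}$, ${\bf S}\,{}^T\!{\bf S}={\bf T}\,{}^T\!{\bf T}={\bf D}$, ${\bf S}{\bf J}_0={\bf T}$, ${\bf T}{\bf J}_0={\bf S}$). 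Taking determinants and using $\det({\bf I}_{2m}+t{\bf J}_0)=(1-t^2)^m$ gives the exponents $n$ on the edge side and $m$ on the vertex side, hence the factor $(1-t^2)^{m-n}=(1-t^2)^{r-1}$; note it is $b-a$, not $a-b$, that equals $m-n$ in your labelling, a harmless slip. Two further remarks: the identities ${\bf A}={\bf S}\,{}^T\!{\bf T}$ and the clean diagonal cancellation rely on $G$ being simple, which is the standing hypothesis of Section \ref{sec:IharaZeta}; and your $\det({\bf I}-{\bf A}{\bf B})$ versus $\det({\bf I}-{\bf B}{\bf A})$ device for the second equality is precisely the mechanism (Lemma \ref{LemDeterminant}) that the paper itself uses in Section \ref{sec:QuatDetFormula} to prove the quaternionic generalization, Theorem \ref{QuatDetFormula}, so your plan is consonant with the paper's own technique.
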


Let ${\bf W}=({\bf W}_{uv})_{u,v{\in}V(G)}{\;\in\;}\Mat(n,\CM)$ such that 
${\bf W}_{uv}{\;\neq\;}0$ if $(u,v){\;\notin\;}D(G)$. 
${\bf W}$ is called a {\em weighted matrix} of $G$.
Put $w(e)= {\bf W}_{uv}$ if $e= (u,v){\;\in\;}D(G)$. 
For a weighted matrix ${\bf W}$ of $G$, the complex matrix 
${\bf B}_w=( {\bf B}^{(w)}_{ef} )_{e,f \in D(G)}{\;\in\;}\Mat(2m,\CM)$ is defined as follows: 
\begin{equation*}
{\bf B}^{(w)}_{ef} =\left\{
\begin{array}{ll}
w(f) & \mbox{if $t(e)=o(f)$, } \\
0 & \mbox{otherwise.}
\end{array}
\right.
\end{equation*}
In \cite{Sato}, Sato defined the second weighted zeta function of a graph. 
The {\em second weighted zeta function} of $G$ is defined by 
\[
{\bf Z}_1 (G,w,t)= \det ( {\bf I}_{2m} -t ( {\bf B}_w - {\bf J}_0 ) )^{-1} . 
\]
\begin{rem}
Precisely, Sato defined a new Bartholdi zeta function ${\bf Z}_1 (G,w,u,t)$ as follows: 
\[
{\bf Z}_1 (G,w,u,t)=\det({\bf I}_n-t({\bf B}_w-(1-u){\bf J}_0),
\]
which reduces to ${\bf Z}_1 (G,w,t)$ with the specialization $u=0$. 
${\bf Z}_1 (G,w,t)$ can also be viewed as a natural generalization of the zeta function 
of the edge-indexed graph defined by Bass \cite{Bass1992}. 
\end{rem}
If $w(e)=1$ for all $e \in D(G)$, then the second weighted zeta function of $G$ 
coincides with the Ihara zeta function of $G$.

\begin{thm}[Sato \cite{Sato}]\label{SatoThm}
Let $G$ be a connected graph and ${\bf W}$ a weighted matrix of $G$. 
Then the reciprocal of the second weighted zeta function of $G$ is given by 
\begin{equation}\label{EqnDetFormula}
{\bf Z}_1 (G,w,t )^{-1} =(1- t^2 )^{m-n} 
\det ({\bf I}_n -t {\bf W}+ t^2 ( {\bf D}_w - {\bf I}_n )) , 
\end{equation}
where $n=|V(G)|$, $m=|E(G)|$ and 
${\bf D}_w =({\bf D}^{(w)}_{uv})_{u,v{\in}V(G)}$ is the diagonal matrix 
with ${\bf D}^{(w)}_{uu} = \displaystyle \sum_{e:o(e)=u}w(e)$ for all 
$u{\;\in\;}V(G)$. 
\end{thm}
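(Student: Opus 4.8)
\emph{Proof proposal.} The plan is to carry out the classical Bass-type reduction, adapted to the weighted edge matrix ${\bf B}_w$: realize ${\bf B}_w$ as a product of vertex--arc incidence matrices, strip off the involution matrix ${\bf J}_0$ using its elementary spectral structure, and collapse the resulting $2m\times 2m$ determinant to an $n\times n$ one via the Sylvester identity $\det({\bf I}_{2m}-{\bf Y}{\bf X})=\det({\bf I}_n-{\bf X}{\bf Y})$. First I would introduce three $n\times 2m$ matrices over $\CM$: the weighted origin matrix ${\bf K}$ with ${\bf K}_{ve}=w(e)$ if $o(e)=v$ and $0$ otherwise, its unweighted version ${\bf K}_0$ with $({\bf K}_0)_{ve}=1$ if $o(e)=v$ and $0$ otherwise, and the terminus matrix ${\bf L}$ with ${\bf L}_{ve}=1$ if $t(e)=v$ and $0$ otherwise. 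A direct entrywise check then gives the factorization ${\bf B}_w={}^T{\bf L}\,{\bf K}$ along with the auxiliary identities ${\bf L}{\bf J}_0={\bf K}_0$ (hence ${\bf J}_0\,{}^T{\bf L}={}^T{\bf K}_0$ since ${}^T{\bf J}_0={\bf J}_0$), ${\bf K}\,{}^T{\bf L}={\bf W}$, and ${\bf K}\,{}^T{\bf K}_0={\bf D}_w$; the last two use that $G$ is simple and loopless, so that each ordered vertex pair carries at most one arc.

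Next I would eliminate ${\bf J}_0$. Since ${\bf J}_0$ is the permutation matrix of the fixed-point-free involution $e\mapsto e^{-1}$, it decomposes into $m$ diagonal blocks $\left(\begin{smallmatrix}0&1\\1&0\end{smallmatrix}\right)$; hence $\det({\bf I}_{2m}+t{\bf J}_0)=(1-t^2)^m$ and $({\bf I}_{2m}+t{\bf J}_0)^{-1}=(1-t^2)^{-1}({\bf I}_{2m}-t{\bf J}_0)$. Writing ${\bf I}_{2m}-t({\bf B}_w-{\bf J}_0)=({\bf I}_{2m}+t{\bf J}_0)\bigl({\bf I}_{2m}-t({\bf I}_{2m}+t{\bf J}_0)^{-1}{\bf B}_w\bigr)$ and using ${\bf B}_w={}^T{\bf L}\,{\bf K}$ together with ${\bf J}_0\,{}^T{\bf L}={}^T{\bf K}_0$ gives $({\bf I}_{2m}+t{\bf J}_0)^{-1}{\bf B}_w=(1-t^2)^{-1}\,{}^T({\bf L}-t{\bf K}_0)\,{\bf K}$, so that
\[
{\bf Z}_1(G,w,t)^{-1}=\det\bigl({\bf I}_{2m}-t({\bf B}_w-{\bf J}_0)\bigr)=(1-t^2)^m\,\det\bigl({\bf I}_{2m}-\tfrac{t}{1-t^2}\,{}^T({\bf L}-t{\bf K}_0)\,{\bf K}\bigr).
\]

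Then I would apply the Sylvester determinant identity with ${\bf Y}=\tfrac{t}{1-t^2}\,{}^T({\bf L}-t{\bf K}_0)$ (size $2m\times n$) and ${\bf X}={\bf K}$ (size $n\times 2m$) to turn the right-hand determinant into $\det\bigl({\bf I}_n-\tfrac{t}{1-t^2}\,{\bf K}\,{}^T({\bf L}-t{\bf K}_0)\bigr)$. By the identities of the first step, ${\bf K}\,{}^T({\bf L}-t{\bf K}_0)={\bf K}\,{}^T{\bf L}-t\,{\bf K}\,{}^T{\bf K}_0={\bf W}-t{\bf D}_w$, so the inner matrix is ${\bf I}_n-\tfrac{t}{1-t^2}({\bf W}-t{\bf D}_w)$. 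Pulling the factor $(1-t^2)^{-1}$ out of each of the $n$ rows and using $(1-t^2){\bf I}_n-t{\bf W}+t^2{\bf D}_w={\bf I}_n-t{\bf W}+t^2({\bf D}_w-{\bf I}_n)$ yields
\[
{\bf Z}_1(G,w,t)^{-1}=(1-t^2)^m(1-t^2)^{-n}\det\bigl({\bf I}_n-t{\bf W}+t^2({\bf D}_w-{\bf I}_n)\bigr)=(1-t^2)^{m-n}\det\bigl({\bf I}_n-t{\bf W}+t^2({\bf D}_w-{\bf I}_n)\bigr),
\]
as claimed.

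The computation itself is routine linear algebra; the one place where care is needed is the incidence setup of the first step. The weight $w(e)$ must sit on the origin matrix ${\bf K}$, not on ${\bf L}$: this is what makes ${}^T{\bf L}\,{\bf K}$ equal \emph{exactly} ${\bf B}_w$ (and produces ${\bf W}$ and ${\bf D}_w$, rather than some twisted variants, in the products ${\bf K}\,{}^T{\bf L}$ and ${\bf K}\,{}^T{\bf K}_0$), and it ensures that the asymmetry of $w$ under $e\mapsto e^{-1}$ is never invoked --- the inverse arc enters only through the purely combinatorial relation ${\bf L}{\bf J}_0={\bf K}_0$. Once these incidence relations are verified, the rest is the standard Bass manipulation.
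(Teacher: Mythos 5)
Your proof is correct and follows the same Bass-type reduction that the paper itself employs for the quaternionic generalization in Theorem \ref{QuatDetFormula}: factor the weighted edge matrix through vertex--arc incidence matrices, strip off ${\bf J}_0$ using its block structure, and collapse the $2m\times 2m$ determinant to an $n\times n$ one via $\det({\bf I}-{\bf Y}{\bf X})=\det({\bf I}-{\bf X}{\bf Y})$ (the paper's Lemma \ref{LemDeterminant}). The only cosmetic difference is that you explicitly invert ${\bf I}_{2m}+t{\bf J}_0$ (valid for $t\neq\pm1$, with the identity extending to all $t$ since both sides are polynomials), whereas the paper's version multiplies through by $\det({\bf I}-t{\bf J}_0)$ and suitable powers of $(1-t^2)$ to avoid inverses; your incidence identities ${\bf B}_w={}^T{\bf L}\,{\bf K}$, ${\bf L}{\bf J}_0={\bf K}_0$, ${\bf K}\,{}^T{\bf L}={\bf W}$, ${\bf K}\,{}^T{\bf K}_0={\bf D}_w$ all check out.
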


Taking transposes of both sides of (\ref{EqnDetFormula}), we obtain the following equation: 

\begin{cor}\label{SatoThmTransposedType}
\[
\det ( {\bf I}_{2m} -t ( {}^T\!{\bf B}_w - {\bf J}_0 ) ) =(1- t^2 )^{m-n} 
\det ({\bf I}_n -t {}^T\!{\bf W}+ t^2 ( {\bf D}_w - {\bf I}_n )). 
\]
\end{cor}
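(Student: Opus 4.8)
The plan is to derive the identity by transposing, term by term, the determinant formula (\ref{EqnDetFormula}) of Theorem \ref{SatoThm}. Both sides of (\ref{EqnDetFormula}) are determinants of square matrices with entries in the commutative ring $\CM[t]$, and a square matrix has the same determinant as its transpose; so applying the transpose inside each determinant preserves the equality. The work then reduces to identifying ${}^T({\bf I}_{2m}-t({\bf B}_w-{\bf J}_0))$ and ${}^T({\bf I}_n-t{\bf W}+t^2({\bf D}_w-{\bf I}_n))$ with the matrices displayed in the corollary, the scalar factor $(1-t^2)^{m-n}$ being unaffected by transposition.

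For the edge-side matrix, the key observation is that ${\bf J}_0$ is symmetric: $({\bf J}_0)_{ef}=1$ precisely when $f=e^{-1}$, and because $e\mapsto e^{-1}$ is an involution on $D(G)$ this is the same as $e=f^{-1}$, so ${}^T{\bf J}_0={\bf J}_0$. Together with ${}^T{\bf I}_{2m}={\bf I}_{2m}$ this gives ${}^T({\bf I}_{2m}-t({\bf B}_w-{\bf J}_0))={\bf I}_{2m}-t({}^T\!{\bf B}_w-{\bf J}_0)$, which is exactly the matrix on the left-hand side of the corollary. For the vertex-side matrix, ${\bf I}_n$ is symmetric and ${\bf D}_w$ is diagonal, hence symmetric, so ${}^T({\bf I}_n-t{\bf W}+t^2({\bf D}_w-{\bf I}_n))={\bf I}_n-t\,{}^T\!{\bf W}+t^2({\bf D}_w-{\bf I}_n)$, which is the matrix on the right-hand side. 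Combining these with $\det {\bf M}=\det {}^T\!{\bf M}$ finishes the argument.

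There is essentially no obstacle here; the one point worth flagging is \emph{why} the corollary is phrased with the literal transpose ${}^T\!{\bf B}_w$ and still with ${\bf D}_w$, rather than being obtained by applying Theorem \ref{SatoThm} to the weighted matrix ${}^T\!{\bf W}$. Running Sato's formula on ${}^T\!{\bf W}$ would produce, on the vertex side, the diagonal matrix of column sums of ${\bf W}$ (equivalently, of row sums of ${}^T\!{\bf W}$), which in general differs from ${\bf D}_w$, and on the edge side a Hashimoto-type matrix ${\bf B}_{{}^T\!{\bf W}}$ which in general differs from ${}^T\!{\bf B}_w$. It is precisely transposing the already-proved identity (\ref{EqnDetFormula}), rather than re-deriving it, that simultaneously keeps ${\bf B}_w$ merely transposed as a matrix and leaves ${\bf D}_w$ unchanged.
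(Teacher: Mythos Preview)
Your proof is correct and follows exactly the approach the paper indicates: the paper's own justification is the single sentence ``Taking transposes of both sides of (\ref{EqnDetFormula}), we obtain the following equation,'' and you have simply made explicit the symmetry of ${\bf J}_0$, ${\bf I}_n$, ${\bf I}_{2m}$, and ${\bf D}_w$ that this relies on. Your final paragraph is a nice clarification but is not needed for the argument.
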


In order to apply zeta functions to obtain the spectral mapping property for 
the quaternionic quantum walk on a graph, we will generalize the determinant formula 
of transposed type in the case of quaternions in the next section.

\section{A quaternionization of the determinant formula for the second weighted zeta function}
\label{sec:QuatDetFormula}
We shall give a generalization of Corollary \ref{SatoThmTransposedType}. 
Hereafter, we assume that $G$ has no multiple edges and has at most one loop at each vertex, 
unless otherwise stated. Let $L(G)$ be the set of loops in $G$ and 
$|V(G)|=n,\;|E(G)|=m=m_0+m_1$, where $m_1=|L(G)|$. 
By the definition of arc, $e^{-1}=e$ if and only if $e$ is a loop. 
Hence it follows that $|D(G)|=m'=2m_0+m_1$. 
For a vertex $u$, $d_u=\deg u$, $d^-_u=\deg^- u$ and $d^+_u=\deg^+ u$ denote 
the {\it degree} of $u$, the {\it indegree} of $u$ and the {\it outdegree} of $u$, 
respectively. 

Let $a$ and $b$ be two maps from $D(G)$ to $\HM$ 
and $a(e)=a(e)^S+ja(e)^P$ and $b(e)=b(e)^S+jb(e)^P$ their symplectic decompositions. 
We define ${\bf K}=({\bf K}_{ev})_{e{\in}D(G),v{\in}V(G)}$ and 
${\bf L}=({\bf L}_{ev})_{e{\in}D(G),v{\in}V(G)}$ to be $m'{\times}n$ matrices as follows: 
\begin{equation}\label{DefMatrixKandL}
{\bf K}_{ev}=
\begin{cases}
a(e) & \text{if $o(e)=v$,}\\
0 & \text{otherwise,}
\end{cases}\quad
{\bf L}_{ev}=
\begin{cases}
b(e) & \text{if $t(e)=v$,}\\
0 & \text{otherwise,}
\end{cases}
\end{equation}
where column index and row index are ordered by fixed sequences 
$v_1,{\ldots},v_n$ and $e_1,{\ldots},e_{m'}$ such that $e_{2r}=e_{2r-1}^{-1}$ for 
$r=1,{\ldots},2m_0$ and $e_r$ is a loop for $r=2m_0+1,{\ldots},m'$, 
respectively. 
${\bf K}{\bf L}^*=(({\bf K}{\bf L}^*)_{ef})_{e,f{\in}D(G)}$ is the $m'{\times}m'$ 
matrix given by the following equation: 
\begin{equation}\label{DefMatrixKL*}
({\bf K}{\bf L}^*)_{ef}=
\begin{cases}
a(e)b(f)^* & \text{if $t(f)=o(e)$,}\\
0 & \text{otherwise.}
\end{cases}
\end{equation}
In order to obtain a quaternionization of Corollary \ref{SatoThmTransposedType}, 
we define two more matrices 
$\tilde{\bf W}=(\tilde{\bf W}_{uv})_{u,v{\in}V(G)}$, 
$\tilde{\bf D}=(\tilde{\bf D}_{uv})_{u,v{\in}V(G)}$ as follows: 
\begin{equation}\label{DefMatrixTildeW}
\begin{split}
\tilde{\bf W}_{uv}&=
\begin{cases}
b((v,u))^*a((v,u)) & \text{if $(v,u){\;\in\;}D(G)$,}\\
0 & \text{otherwise,}
\end{cases}\\
\tilde{\bf D}_{uv}&=
\begin{cases}
\displaystyle \sum_{e{\in}D(G),o(e)=u}
b(e^{-1})^*a(e) & \text{if $u=v$,}\\
0 & \text{otherwise,}
\end{cases}
\end{split}
\end{equation}
We call ${\bf \tilde{W}}$ the {\it doubly weighted matrix}. 
We can check by calculations that 
$\tilde{\bf W}={\bf L}^*{\bf K}$ and ${\bf L}^*{\bf J}_0{\bf K}=\tilde{\bf D}$. 
We also notice that ${\bf J}_0$ is the block diagonal matrix of the form: 
\[
{\bf J}_0=\begin{pmatrix}
{\bf J}_2& & & \\
 &\ddots& & \\
 & &{\bf J}_2& \\
 & & &{\bf I}_{m_1}
\end{pmatrix},\text{ where }{\bf J}_2=\begin{pmatrix}0&1\\1&0\end{pmatrix} 
\text{ places $m_0$ times along the diagonal.}
\]
Before stating the determinant formula, we mention a useful fact on determinant. 

\begin{lem}\label{LemDeterminant}
Let ${\bf A}$ and ${\bf B}$ be an $m{\times}n$ and an $n{\times}m$ matrices, 
respectively. Then for every complex number $\alpha$, the following holds:  
\begin{equation*}
\det (\alpha{\bf I}_m-{\bf AB})\alpha^n=\alpha^m\det (\alpha{\bf I}_n-{\bf BA}).
\end{equation*}
\end{lem}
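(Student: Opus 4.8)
The identity to prove is the classical "Sylvester determinant" / "Weinstein--Aronszajn" type relation for non-square matrices, and the plan is to reduce it to the square case by a block-matrix manipulation. The cleanest route is to consider the $(m+n)\times(m+n)$ block matrix
\[
\begin{pmatrix} \alpha{\bf I}_m & {\bf A} \\ {\bf B} & {\bf I}_n \end{pmatrix}
\]
and compute its determinant in two ways using the Schur complement formula. Computing with respect to the bottom-right block ${\bf I}_n$ gives $\det({\bf I}_n)\cdot\det(\alpha{\bf I}_m-{\bf A}{\bf I}_n^{-1}{\bf B})=\det(\alpha{\bf I}_m-{\bf AB})$. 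Computing with respect to the top-left block $\alpha{\bf I}_m$ (valid when $\alpha\neq 0$) gives $\det(\alpha{\bf I}_m)\cdot\det({\bf I}_n-{\bf B}(\alpha{\bf I}_m)^{-1}{\bf A})=\alpha^m\det({\bf I}_n-\alpha^{-1}{\bf BA})=\alpha^{m-n}\det(\alpha{\bf I}_n-{\bf BA})$. Equating the two expressions yields $\det(\alpha{\bf I}_m-{\bf AB})=\alpha^{m-n}\det(\alpha{\bf I}_n-{\bf BA})$, which after clearing the negative power of $\alpha$ is exactly the claimed identity $\det(\alpha{\bf I}_m-{\bf AB})\,\alpha^n=\alpha^m\det(\alpha{\bf I}_n-{\bf BA})$.

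First I would set up the block matrix and recall (or cite) the Schur complement determinant formula $\det\begin{pmatrix}P&Q\\R&S\end{pmatrix}=\det(S)\det(P-QS^{-1}R)=\det(P)\det(S-RP^{-1}Q)$, valid when the relevant corner block is invertible. Then I would carry out the two evaluations described above and equate them. Since both sides of the desired identity $\det(\alpha{\bf I}_m-{\bf AB})\,\alpha^n=\alpha^m\det(\alpha{\bf I}_n-{\bf BA})$ are polynomials in the entries of ${\bf A},{\bf B}$ and in $\alpha$, and they agree for all $\alpha\neq 0$, they agree identically; so the restriction $\alpha\neq 0$ used in the Schur argument costs nothing. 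Alternatively, one avoids even mentioning $\alpha=0$ by noting the set $\{\alpha\neq 0\}$ is Zariski-dense.

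There is essentially no obstacle here — this is a standard fact — so the "hard part" is merely bookkeeping: making sure the powers of $\alpha$ balance correctly (the asymmetry $\alpha^n$ versus $\alpha^m$ is precisely what records the size mismatch $m\neq n$), and deciding how much of the Schur complement formula to take as known versus prove inline. If one prefers a fully self-contained argument, I would instead factor the block matrix directly:
\[
\begin{pmatrix} \alpha{\bf I}_m & {\bf A} \\ {\bf B} & {\bf I}_n \end{pmatrix}
=\begin{pmatrix} {\bf I}_m & {\bf A} \\ {\bf 0} & {\bf I}_n \end{pmatrix}
\begin{pmatrix} \alpha{\bf I}_m-{\bf AB} & {\bf 0} \\ {\bf B} & {\bf I}_n \end{pmatrix}
=\begin{pmatrix} {\bf I}_m & {\bf 0} \\ \alpha^{-1}{\bf B} & {\bf I}_n \end{pmatrix}
\begin{pmatrix} \alpha{\bf I}_m & {\bf A} \\ {\bf 0} & {\bf I}_n-\alpha^{-1}{\bf B}{\bf A} \end{pmatrix},
\]
take determinants of each factorization (the triangular factors having determinant $1$ or $\alpha^m$ as appropriate), and equate — this reduces the whole proof to multiplying out two products of block-triangular matrices and reading off determinants.
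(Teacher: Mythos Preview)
Your argument is correct; both it and the paper's proof are block-matrix derivations of the Sylvester determinant identity, and the polynomial-continuation step to cover $\alpha=0$ is handled properly.

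The paper's execution differs slightly from yours: rather than taking the Schur complement of the single block matrix $\begin{pmatrix}\alpha{\bf I}_m & {\bf A}\\ {\bf B} & {\bf I}_n\end{pmatrix}$, it introduces a parameter $\beta$ with $\alpha=\beta^2$, forms the symmetric-looking product
\[
\begin{pmatrix}\beta{\bf I}_m & {\bf A}\\ {\bf B} & \beta{\bf I}_n\end{pmatrix}
\begin{pmatrix}\beta{\bf I}_m & {\bf O}\\ -{\bf B} & \beta{\bf I}_n\end{pmatrix},
\]
and computes the determinant by multiplying out the two factors in either order (using $\det(XY)=\det(YX)$). This yields block-triangular matrices with $\beta^2{\bf I}_m-{\bf AB}$, respectively $\beta^2{\bf I}_n-{\bf BA}$, on the diagonal, giving the identity directly without ever inverting a block; the substitution $\alpha=\beta^2$ then covers every complex $\alpha$ since $\CM$ is algebraically closed. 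Your Schur-complement route is the more commonly cited one and arguably cleaner to read; the paper's version trades the ``$\alpha\neq 0$ then extend'' step for the ``every $\alpha$ is a square'' step, which comes out to the same thing.
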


\begin{proof}
Since 
\begin{equation*}
\begin{split}
\begin{pmatrix}\beta{\bf I}_m & {\bf A} \\ 
{\bf B} & \beta{\bf I}_n\end{pmatrix}
\begin{pmatrix}\beta{\bf I}_m & {\bf O} \\ 
-{\bf B} & \beta{\bf I}_n\end{pmatrix}
&=\begin{pmatrix}\beta^2{\bf I}_m-{\bf AB} & \beta{\bf A} \\ 
{\bf O} & \beta^2{\bf I}_n\end{pmatrix},\\
\begin{pmatrix}\beta{\bf I}_m & {\bf O} \\ 
-{\bf B} & \beta{\bf I}_n\end{pmatrix}
\begin{pmatrix}\beta{\bf I}_m & {\bf A} \\ 
{\bf B} & \beta{\bf I}_n\end{pmatrix}
&=\begin{pmatrix}\beta^2{\bf I}_m & \beta{\bf A} \\ 
{\bf O} & \beta^2{\bf I}_n-{\bf BA}\end{pmatrix},
\end{split}
\end{equation*}
one can see by letting $\alpha=\beta^2$ that
\begin{equation*}
\begin{split}
&\det (\alpha{\bf I}_m-{\bf AB})\alpha^n \\
&=\det \begin{pmatrix}\beta^2{\bf I}_m-{\bf AB} & \beta{\bf A} \\ 
{\bf O} & \beta^2{\bf I}_n\end{pmatrix}
=\det \begin{pmatrix}\beta{\bf I}_m & {\bf A} \\ 
{\bf B} & \beta{\bf I}_n\end{pmatrix}
\det \begin{pmatrix}\beta{\bf I}_m & {\bf O} \\ 
-{\bf B} & \beta{\bf I}_n\end{pmatrix}\\
&=\det (\begin{pmatrix}\beta{\bf I}_m & {\bf O} \\ 
-{\bf B} & \beta{\bf I}_n\end{pmatrix}
\det \begin{pmatrix}\beta{\bf I}_m & {\bf A} \\ 
{\bf B} & \beta{\bf I}_n\end{pmatrix}
=\det \begin{pmatrix}\beta^2{\bf I}_m & \beta{\bf A} \\ 
{\bf O} & \beta^2{\bf I}_n-{\bf BA}\end{pmatrix}\\
&=\alpha^m\det (\alpha{\bf I}_n-{\bf BA})
\end{split}
\end{equation*}
\end{proof}
 
Now we state a quaternionization of the determinant formula for 
the second weighted zeta function of a graph with at most one loop at each vertex. 

\begin{thm}\label{QuatDetFormula}
Let $t$ be a complex variable. Then
\begin{equation*}
\begin{split}
&\det({\bf I}_{2m'}-t\psi({\bf K}{\bf L}^*-{\bf J}_0))\\
&=(1-t^2)^{2m_0-2n}(1+t)^{2m_1}
\det({\bf I}_{2n}-t\psi(\tilde{\bf W})+t^2(\psi(\tilde{\bf D})-{\bf I}_{2n}))
\end{split}
\end{equation*}
\end{thm}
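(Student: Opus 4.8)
The plan is to transport the whole identity through the complexification $\psi$, reducing it to a purely complex determinantal identity, which can then be handled by a Schur--complement/Sylvester--swap argument in the spirit of Lemma~\ref{LemDeterminant}. First I would complexify: set $\hat{\bf K}=\psi({\bf K})$, $\hat{\bf L}=\psi({\bf L})\in\Mat(2m'\times 2n,\CM)$ and $\hat{\bf J}=\psi({\bf J}_0)\in\Mat(2m',\CM)$. Using Lemma~\ref{PsiLem} (applied to the relevant rectangular products), the relation $\psi({\bf M}^*)=\psi({\bf M})^*$, and the already recorded identities $\tilde{\bf W}={\bf L}^*{\bf K}$ and $\tilde{\bf D}={\bf L}^*{\bf J}_0{\bf K}$, one gets
\[
\psi({\bf K}{\bf L}^*-{\bf J}_0)=\hat{\bf K}\hat{\bf L}^*-\hat{\bf J},\qquad \psi(\tilde{\bf W})=\hat{\bf L}^*\hat{\bf K},\qquad \psi(\tilde{\bf D})=\hat{\bf L}^*\hat{\bf J}\hat{\bf K},
\]
so the assertion becomes the complex identity
\[
\det\big({\bf I}_{2m'}+t\hat{\bf J}-t\hat{\bf K}\hat{\bf L}^*\big)=(1-t^2)^{2m_0-2n}(1+t)^{2m_1}\det\big({\bf I}_{2n}-t\hat{\bf L}^*\hat{\bf K}+t^2(\hat{\bf L}^*\hat{\bf J}\hat{\bf K}-{\bf I}_{2n})\big).
\]

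Next I would exploit the structure of ${\bf J}_0$. Since ${\bf J}_0$ is a real matrix, $\hat{\bf J}=\operatorname{diag}({\bf J}_0,{\bf J}_0)$, and from ${\bf J}_2^2={\bf I}_2$ and ${\bf I}_{m_1}^2={\bf I}_{m_1}$ we get $\hat{\bf J}^2={\bf I}_{2m'}$, while the block description of ${\bf J}_0$ yields
\[
\det({\bf I}_{2m'}+t\hat{\bf J})=\big(\det({\bf I}_{m'}+t{\bf J}_0)\big)^2=\big((1-t^2)^{m_0}(1+t)^{m_1}\big)^2=(1-t^2)^{2m_0}(1+t)^{2m_1};
\]
it is precisely the loop block ${\bf I}_{m_1}$ that breaks the $(1-t^2)$--symmetry and accounts for the extra factor $(1+t)^{2m_1}$. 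For $t\neq\pm1$ the matrix ${\bf I}_{2m'}+t\hat{\bf J}$ is invertible with $({\bf I}_{2m'}+t\hat{\bf J})^{-1}=(1-t^2)^{-1}({\bf I}_{2m'}-t\hat{\bf J})$, so I factor ${\bf I}_{2m'}+t\hat{\bf J}-t\hat{\bf K}\hat{\bf L}^*=({\bf I}_{2m'}+t\hat{\bf J})\big({\bf I}_{2m'}-t({\bf I}_{2m'}+t\hat{\bf J})^{-1}\hat{\bf K}\hat{\bf L}^*\big)$, take determinants, and apply Lemma~\ref{LemDeterminant} with $\alpha=1$ to the $2m'\times 2n$ and $2n\times 2m'$ factors:
\[
\det\big({\bf I}_{2m'}-t({\bf I}_{2m'}+t\hat{\bf J})^{-1}\hat{\bf K}\hat{\bf L}^*\big)=\det\big({\bf I}_{2n}-t\hat{\bf L}^*({\bf I}_{2m'}+t\hat{\bf J})^{-1}\hat{\bf K}\big).
\]
Substituting the explicit inverse and using $\hat{\bf L}^*\hat{\bf K}=\psi(\tilde{\bf W})$, $\hat{\bf L}^*\hat{\bf J}\hat{\bf K}=\psi(\tilde{\bf D})$, the right-hand side equals
\[
\det\Big({\bf I}_{2n}-\frac{t}{1-t^2}\big(\psi(\tilde{\bf W})-t\psi(\tilde{\bf D})\big)\Big)=(1-t^2)^{-2n}\det\big({\bf I}_{2n}-t\psi(\tilde{\bf W})+t^2(\psi(\tilde{\bf D})-{\bf I}_{2n})\big).
\]
Multiplying by $\det({\bf I}_{2m'}+t\hat{\bf J})=(1-t^2)^{2m_0}(1+t)^{2m_1}$ collects the prefactor into $(1-t^2)^{2m_0-2n}(1+t)^{2m_1}$, which is exactly the claimed formula for $t\neq\pm1$. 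Since both sides are rational functions of $t$ (the left-hand side being a polynomial) agreeing off two points, the identity holds identically.

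I expect the only genuinely delicate points to be bookkeeping: getting the exponents $2m_0$, $2m_1$, $2n$ right, which hinges entirely on treating the loop part ${\bf I}_{m_1}$ of ${\bf J}_0$ separately from the ${\bf J}_2$ blocks, and on double-checking the matrix identities $\tilde{\bf W}={\bf L}^*{\bf K}$, $\tilde{\bf D}={\bf L}^*{\bf J}_0{\bf K}$ directly from the definitions in (\ref{DefMatrixKandL})--(\ref{DefMatrixTildeW}). If one prefers not to invert ${\bf I}_{2m'}+t\hat{\bf J}$ (or to sidestep the genericity remark), the same computation can be organized as two block-triangular factorizations of $\begin{pmatrix}{\bf I}_{2m'}+t\hat{\bf J}&t\hat{\bf K}\\ \hat{\bf L}^*&{\bf I}_{2n}\end{pmatrix}$, exactly as in the proof of Lemma~\ref{LemDeterminant}, after which one divides by the nonzero polynomial $\det({\bf I}_{2m'}+t\hat{\bf J})$.
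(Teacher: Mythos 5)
Your argument is correct and is essentially the paper's own proof: both reduce to a complex identity via $\psi$, apply Lemma~\ref{LemDeterminant} to swap $\hat{\bf K}\hat{\bf L}^*$ for $\hat{\bf L}^*\hat{\bf K}$, use $\hat{\bf J}^2={\bf I}_{2m'}$ together with the block structure of ${\bf J}_0$ to produce the $(1-t^2)^{2m_0}(1+t)^{2m_1}$ factor, and finish by a polynomial/rational-function identity off $t=\pm1$. The only cosmetic difference is that you factor out and invert ${\bf I}_{2m'}+t\hat{\bf J}$ on the left (Sylvester with $\alpha=1$), whereas the paper post-multiplies by ${\bf I}_{2m'}-t\hat{\bf J}$ and applies the lemma with $\alpha=1-t^2$ — by $({\bf I}_{2m'}+t\hat{\bf J})^{-1}=(1-t^2)^{-1}({\bf I}_{2m'}-t\hat{\bf J})$ these are the same manipulation.
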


\begin{proof} 
Using Lemma \ref{LemDeterminant}, we have the following sequence of equations: 
\begin{equation}\label{EqnSecDeterminants}
\begin{split}
&\det({\bf I}_{2m'}-t\psi({\bf K}{\bf L}^*-{\bf J}_0))\det({\bf I}_{2m'}-t\psi({\bf J}_0))(1-t^2)^{2n}\\
&=\det({\bf I}_{2m'}-t\psi({\bf K})\psi({\bf L})^*+t\psi({\bf J}_0))\det({\bf I}_{2m'}-t\psi({\bf J}_0))(1-t^2)^{2n}\\
&=\det((1-t^2){\bf I}_{2m'}-t\psi({\bf K})\psi({\bf L})^*({\bf I}_{2m'}-t\psi({\bf J}_0)))(1-t^2)^{2n}\\
&=\det((1-t^2){\bf I}_{2n}-t\psi({\bf L})^*({\bf I}_{2m'}-t\psi({\bf J}_0))\psi({\bf K}))(1-t^2)^{2m'}\\
&=\det((1-t^2){\bf I}_{2n}-t\psi({\bf L})^*\psi({\bf K})+t^2\psi({\bf L})^*\psi({\bf J}_0)\psi({\bf K}))(1-t^2)^{2m'}\\
&=\det((1-t^2){\bf I}_{2n}-t\psi(\tilde{\bf W})+t^2\psi(\tilde{\bf D}))(1-t^2)^{2m'}.
\end{split}
\end{equation}
Since the following holds: 
\begin{equation*}
\det({\bf I}_{2m'}-t\psi({\bf J}_0))=(1-t^2)^{2m_0}(1-t)^{2m_1},
\end{equation*}
for every $t{\;\in\;}\CM$, we have 
\begin{equation}\label{EqnSecDeterminants2}
\begin{split}
&(1-t^2)^{2m_0+2n}(1-t)^{2m_1}\det({\bf I}_{2m'}-t\psi({\bf K}{\bf L}^*-{\bf J}_0))\\
&=(1-t^2)^{4m_0+2m_1}\det({\bf I}_{2n}-t\psi(\tilde{\bf W})+t^2(\psi(\tilde{\bf D})-{\bf I}_{2n})).
\end{split}
\end{equation}
The left hand side and the right hand side of (\ref{EqnSecDeterminants2}) are equal as polynomials and thus 
the assertion follows. 
\end{proof}

\section{A quaternionization of quantum walks of Szegedy type and right spectral problem}
\label{sec:TransMat}
In this section, we give a quaternionization of quantum walks of Szegedy type on graphs 
and derive their spectral mapping properties. 
An important example of the quantum walk on a graph 
is the Grover walk. The Grover walk originates from a quantum search algorithm introduced 
by Grover \cite{Grover1996}. 
After Grover's approach, several quantum search algorithms, for example 
\cite{Ambainis2003b,Szegedy2004}, was proposed. These are strongly related to 
quantum walks on graphs, and have been investigated for more than a decade. 
In this article, we focus on the discrete-time quantum walk on a graph based on 
the quantum search algorithm proposed by Szegedy \cite{Szegedy2004}. 

We give a definition of discrete-time quantum walks on graphs in a similar manner 
as in \cite{Ambainis2003}. 
Other formulations of discrete-time quantum walks on graphs can be seen in, 
for example, \cite{Watrous2001,RenETAL}. 
Let $\mathscr{H}=\oplus_{e{\in}D(G)}{\CM}|e\rangle$ be the finite dimensional Hilbert 
space spanned by arcs of $G$. 
One step of quantum walk consists of the following two operations ${\bf C}$ and ${\bf S}$ 
as follows:
\begin{enumerate}
\renewcommand{\labelenumi}{\rm \arabic{enumi}.}
\item
For each $u{\;\in\;}V$, we perform a unitary transformation ${\bf C}_u$ 
on the states $|f\rangle$ that satisfy $t(f)=u$ and put ${\bf C}$ the direct sum of 
these transformations: ${\bf C}=\oplus_{u{\in}V}{\bf C}_u$. 
${\bf C}$ is called the {\it coin} operator. 
\item
For all $e{\;\in\;}D(G)$, we perform the {\it shift} operator ${\bf S}$ defined by 
${\bf S}|e{\rangle}=|e^{-1}{\rangle}$. 
\end{enumerate} 
The transition matrix ${\bf U}$ of a discrete-time quantum walk consists of 
the two consecutive operations ${\bf U}={\bf SC}$. 
Let $p$ be a map from $D(G)$ to $(0,1]$ which satisfies 
$\sum_{e{\in}D(G),o(e)=u}p(e)=1$ for every $u{\;\in\;}V(G)$ and 
$|\phi_u\rangle$ the state vector defined by 
\begin{equation}\label{DefPhi_u}
|\phi_u\rangle=\sum_{f{\in}D(G),t(f)=u}\sqrt{p(f^{-1})}|f\rangle.
\end{equation} 
Then the coin operator ${\bf C}_u$ of the Szegedy walk on $G$ is given by 
\begin{equation*}
{\bf C}_u=2|\phi_u{\rangle}{\langle}\phi_u|-{\bf I}_{d^-_u}.
\end{equation*}
One can immediately check that the transition matrix 
${\bf U}^{\rm Sze}=({\bf U}^{\rm Sze}_{ef})_{e,f \in D(G)}$ 
of the Szegedy walk is given by 
\[
{\bf U}^{\rm Sze}_{ef} =\left\{
\begin{array}{ll}
2\sqrt{p(e)p(f^{-1})} & \mbox{if $t(f)=o(e)$ and $f{\;\neq\;}e^{-1}$, } \\
2p(e) -1 & \mbox{if $f= e^{-1} $, } \\
0 & \mbox{otherwise,}
\end{array}
\right. 
\]
We call ${\bf U}^{\rm Sze}$ the {\it Szegedy matrix}. 
Higuchi et al. \cite{HKSS2013} generalized the Szegedy walk on a graph 
and obtained the spectral mapping theorem for the generalized Szegedy walk. 

Now, we extend the Szegedy walk on a graph to the case of quaternions. 
A background of quaternionic quantum walks is based on quaternionic quantum mechanics. 
In a similar way to \cite{KMS2016}, we briefly explain the general framework of quaternionic 
quantum mechanics to the minimum extent necessary for the definition of 
quaternionic quantum walks. 
For a more detailed and complete exposition of quaternionic quantum mechanics, see \cite{Adler1995}. 
The states of quaternionic quantum mechanics is described by vectors of 
a quaternionic Hilbert space 
$\mathscr{H}_{\HM}$ 
which is defined by the following axioms: 
\begin{enumerate}
\renewcommand{\labelenumi}{\rm (\Roman{enumi})}
\item
$\mathscr{H}_{\HM}$ is a right $\HM$-vector space. 
\item
There exists a scalar product, that is a map 
$(\ ,\ ) : \mathscr{H}_{\HM}{\times}\mathscr{H}_{\HM}{\longrightarrow}\HM$ 
with the properties: 
\begin{equation*}
\begin{split}
({\bf f},\,{\bf g})^*&=({\bf g},\,{\bf f}),\\
({\bf f},\,{\bf f})& \text{ is nonnegative real number, and } 
({\bf f},\,{\bf f})=0{\;\Leftrightarrow\;}{\bf f}=0,\\
({\bf f},\,{\bf g}+{\bf h})&=({\bf f},\,{\bf g})+({\bf f},\,{\bf h})\\
({\bf f},\,{\bf g}{\lambda})&=({\bf f},\,{\bf g})\lambda \quad \text{for every $\lambda{\in}\HM$}.
\end{split}
\end{equation*}
$\|{\bf f}\|$ denotes $\sqrt{({\bf f},\,{\bf f})}$ and is called the {\it norm} of ${\bf f}$. 
\item
$\mathscr{H}_{\HM}$ is separable and complete. 
\end{enumerate}
Ket states are denoted by $|{\bf f}{\rangle}$ that satisfy 
$|{\bf f}{\lambda}{\rangle}=|{\bf f}{\rangle}\lambda$ for every $\lambda{\in}\HM$. 
Bra states ${\langle}{\bf f}|$ are defined as their adjoints in a matrix sense so that 
${\langle}{\bf f}|=|{\bf f}{\rangle}^{\!\dagger}$ where $|{\bf f}{\rangle}^{\!\dagger}$ denotes 
the adjoint of $|{\bf f}{\rangle}$. 
It follows that ${\langle}{\bf f}{\lambda}|={\lambda}^*{\langle}{\bf f}|$ and 
the scalar product can be presented as $({\bf f},\,{\bf g})={\langle}{\bf f}|{\bf g}{\rangle}$. 
${\langle}{\bf f}|{\bf g}{\rangle}$ is called the {\it inner product} or {\it probability amplitude}. 
If $\mathscr{H}_{\HM}$ is $N$-dimensional, 
then $|{\bf f}{\rangle}$ and ${\langle}{\bf f}|$ are presented as follows: 
\begin{equation*}
|{\bf f}{\rangle}={}^{T\!}(f_1\,f_2\,{\cdots}\,f_N),\quad {\langle}{\bf f}|=({f_1}^*\,{f_2}^*\,{\cdots}\,{f_N}^*) \quad 
(f_1,f_2,{\cdots},f_N{\;\in\;}\HM).
\end{equation*}
Using the coordinates, the scalar product can be written as  
\begin{equation*}
{\langle}{\bf f}|{\bf g}{\rangle}=\sum_{\ell=1}^N {f_{\ell}}^*g_{\ell}. 
\end{equation*}
For a probability amplitude ${\langle}{\bf f}|{\bf g}{\rangle}$, one associates a probability: 
\begin{equation*}
P_{{\bf f}{\bf g}}=|{\langle}{\bf f}|{\bf g}{\rangle}|^2.
\end{equation*}
For any $\omega{\;\in\;}\HM$ such that $|\omega|=1$, 
$|{\langle}{\bf f}|{\bf g}{\omega}{\rangle}|^2=|{\langle}{\bf f}|{\bf g}{\rangle}|^2$. 
Hence $P_{{\bf f}({\bf g}{\omega})}=P_{{\bf f}{\bf g}}$ for all ${\bf f}{\;\in\;}\mathscr{H}_{\HM}$. 
Thus physical states bijectively corresponds to sets of unit rays of $\mathscr{H}_{\HM}$ of the form: 
\begin{equation*}
\{|{\bf g}{\omega}{\rangle}\,|\,|{\omega}|=1\}.
\end{equation*} 
Operators ${\bf U}$ on $\mathscr{H}_{\HM}$ are assumed to act from the left as in ${\bf U}|{\bf f}{\rangle}$ and satisfy 
\begin{equation*}
{\bf U}|{\bf f}+{\bf g}{\rangle}={\bf U}|{\bf f}{\rangle}+{\bf U}|{\bf g}{\rangle}, \quad 
{\bf U}(|{\bf f}{\rangle}{\lambda})=({\bf U}|{\bf f}{\rangle}){\lambda},
\end{equation*}
for all $\lambda{\;\in\;}\HM$. 
For any operator ${\bf U}$, the {\it adjoint operator} ${\bf U}^{\dagger}$ is defined by 
$({\bf f},\,{\bf U}{\bf g})=({\bf U}^{\dagger}{\bf f},\,{\bf g})$ 
for arbitrary ${\bf f},\,{\bf g}$ in suitable domains. 
An operator ${\bf U}$ is said to be {\it self-adjoint} if 
${\bf U}^{\dagger}={\bf U}$ holds and {\it unitary} if 
${\bf U}{\bf U}^{\dagger}={\bf U}^{\dagger}{\bf U}={\bf 1}$ holds. 
If $\mathscr{H}_{\HM}$ is finite dimensional, unitary operators are 
unitary as quaternionic matrices. 
In order to explain the dynamics of quaternionic quantum mechanics, we explain the {\it time evolution operator}. 
We assume that time evolution preserves transition probabilities. Precisely, 
for arbitrary states $|{\bf f}(t){\rangle},|{\bf g}(t){\rangle}$ at time $t$ and 
$|{\bf f}(t'){\rangle},|{\bf g}(t'){\rangle}$ at time $t'=t+{\delta}t$, 
$|{\langle}{\bf f}(t)|{\bf g}(t){\rangle}|=|{\langle}{\bf f}(t')|{\bf g}(t'){\rangle}|$ holds. 
Choosing appropriate quaternionic phases, we obtain that 
there exists a unitary operator ${\bf U}(t',t)$ such that 
\begin{equation}\label{eqnTimeEvolution}
|{\bf f}(t'){\rangle}={\bf U}(t',t)|{\bf f}(t){\rangle}.
\end{equation}
${\bf U}(t',t)$ is called the {\it time evolution operator} and 
(\ref{eqnTimeEvolution}) is called the {\it time evolution equation}. 
If $\mathscr{H}_{\HM}$ is $N$-dimensional, then ${\bf U}(t',t)$ can be represented as 
an $N{\times}N$ quaternionic unitary matrix. 

Let us give the definition of discrete-time quaternionic quantum walk. 
A discrete-time quaternionic quantum walk is a quantum process on a graph whose 
state vector, whose entries are quaternions, 
is governed by a quaternionic unitary matrix called the transition matrix 
(or time evolution matrix). 
Let $q$ be a map from $D(G)$ to $\HM^*=\HM-\{0\}$.
We define the state space of the walk to be the right Hilbert space 
$\mathscr{H}_{\HM}=\oplus_{e{\in}D(G)}|e\rangle{\HM}$ over $\HM$ generated by the 
orthonormal basis $\{\,|e\rangle\,|\,e{\;\in\;}D(G)\,\}$. 
We define the coin operator ${\bf C}_u$ by replacing $\sqrt{p(f^{-1})}$ with $q(f^{-1})$ 
in (\ref{DefPhi_u}). 
Since a state vector is multiplied by a scalar on the right, 
$|\phi_u{\rangle}$ turns into 
\begin{equation*}
|\phi_u\rangle=\sum_{f{\in}D(G),t(f)=u}|f{\rangle}q(f^{-1}).
\end{equation*} 
One can readily check that the {\em transition matrix} 
${\bf U} =( {\bf U}_{ef} )_{e,f \in D(G)} $ 
of quaternionic Szegedy walk on $G$ is given by 
\begin{equation*}
{\bf U}_{ef} =\left\{
\begin{array}{ll}
2q(e)q(f^{-1})^* & \mbox{if $t(f)=o(e)$ and $f \neq e^{-1} $, } \\
2|q(e)|^2 -1 & \mbox{if $f= e^{-1} $, } \\
0 & \mbox{otherwise.}
\end{array}
\right.
\end{equation*}
${\bf U}$ can be viewed as the time evolution operator of 
a discrete-time quaternionic quantum system. 
The quaternionic unitary condition on ${\bf U}$ is given by the following: 

\begin{thm}\label{ThmUnitaryCondition}
${\bf U}$ is quaternionic unitary if and only if 
\[ 
\sum_{g{\in}D(G),o(g)=u}|q(g)|^2=1 
\]
for each $u{\;\in\;}V(G)$.
\end{thm}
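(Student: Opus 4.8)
The plan is to characterize quaternionic unitarity of ${\bf U}$ by working directly with the entries and exploiting the block structure coming from the shift operator ${\bf S}$ and the coin operator ${\bf C}=\oplus_{u}{\bf C}_u$. First I would observe that ${\bf U}={\bf S}{\bf C}$ where ${\bf S}$ is the permutation matrix sending $|e\rangle$ to $|e^{-1}\rangle$, hence ${\bf S}$ is always quaternionic unitary with ${\bf S}^*={\bf S}$. Consequently ${\bf U}$ is quaternionic unitary if and only if ${\bf C}$ is, so the problem reduces to checking when the block-diagonal matrix ${\bf C}=\oplus_{u{\in}V(G)}{\bf C}_u$ is quaternionic unitary, i.e. when each block ${\bf C}_u=2|\phi_u\rangle\langle\phi_u|-{\bf I}_{d^-_u}$ is quaternionic unitary. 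This isolates the essential computation at a single vertex $u$, acting on the subspace spanned by $\{|f\rangle : t(f)=u\}$.

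Next I would compute ${\bf C}_u{\bf C}_u^*$ directly. Since $({\bf C}_u)^*=2|\phi_u\rangle\langle\phi_u|-{\bf I}$ as well (because $(|\phi_u\rangle\langle\phi_u|)^*=|\phi_u\rangle\langle\phi_u|$), one gets, writing $s_u:=\langle\phi_u|\phi_u\rangle=\sum_{f:t(f)=u}|q(f^{-1})|^2=\sum_{g:o(g)=u}|q(g)|^2$ (a nonnegative real number, hence central),
\[
{\bf C}_u{\bf C}_u^*=\bigl(2|\phi_u\rangle\langle\phi_u|-{\bf I}\bigr)^2
=4|\phi_u\rangle\langle\phi_u|\,s_u-4|\phi_u\rangle\langle\phi_u|+{\bf I}
=4(s_u-1)|\phi_u\rangle\langle\phi_u|+{\bf I}.
\]
Here the reality of $s_u$ is exactly what lets the scalar $s_u$ pass through the ket/bra products without any ordering obstruction — this is the one spot where quaternionic noncommutativity could have bitten, and it does not. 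From this identity, ${\bf C}_u{\bf C}_u^*={\bf I}_{d^-_u}$ holds if and only if $(s_u-1)|\phi_u\rangle\langle\phi_u|={\bf O}$. Since $q$ maps into $\HM^*$, the vector $|\phi_u\rangle$ is nonzero (as long as $d^-_u\geq 1$, which holds for every vertex of a connected graph with $n\geq 2$, and the loop/isolated cases are trivial), so $|\phi_u\rangle\langle\phi_u|\neq{\bf O}$, and the condition becomes $s_u=1$.

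Assembling the pieces: ${\bf C}$ is quaternionic unitary $\iff$ ${\bf C}_u{\bf C}_u^*={\bf I}$ for every $u$ $\iff$ $s_u=\sum_{g{\in}D(G),o(g)=u}|q(g)|^2=1$ for every $u{\;\in\;}V(G)$, and by the reduction in the first step this is equivalent to ${\bf U}$ being quaternionic unitary. (One also checks the relation ${\bf U}_{ef}$ displayed before the theorem really is the matrix of ${\bf S}{\bf C}$ with the stated coin, i.e. $({\bf S}{\bf C})_{ef}=\langle e|{\bf S}{\bf C}|f\rangle=\langle e^{-1}|{\bf C}_{t(f)}|f\rangle=2q(e^{-1})^{?}\cdots$; this is a bookkeeping verification of the sign and conjugation conventions, not a difficulty.) I do not expect a genuine obstacle here: the only thing to be careful about is that $|\phi_u\rangle\langle\phi_u|$ is a rank-one quaternionic operator and $s_u$ is real, so the square expands cleanly; if one preferred, one could alternatively apply $\psi$ and invoke Proposition~\ref{ProInjectivityOfPsi}, reducing to the complex unitarity of $\psi({\bf C}_u)$, but the direct computation is shorter. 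The mildly delicate point worth stating explicitly is why $\langle\phi_u|\phi_u\rangle$ equals $\sum_{g:o(g)=u}|q(g)|^2$: the inner product on $\mathscr{H}_{\HM}$ is $\langle{\bf f}|{\bf g}\rangle=\sum f_\ell^* g_\ell$, so $\langle\phi_u|\phi_u\rangle=\sum_{f:t(f)=u}q(f^{-1})^*q(f^{-1})=\sum_{f:t(f)=u}|q(f^{-1})|^2$, and reindexing by $g=f^{-1}$ (which bijects $\{f:t(f)=u\}$ with $\{g:o(g)=u\}$) gives the claim.
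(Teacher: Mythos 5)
Your proof is correct, but it takes a genuinely different route from the paper's. The paper works entrywise: it writes out $({\bf U}{\bf U}^*)_{ef}=\sum_{g:\,t(g)=o(e)=o(f)}{\bf U}_{eg}({\bf U}_{fg})^*$ directly from the displayed formula for ${\bf U}_{ef}$, shows the diagonal condition is $|q(e)|^2\bigl(\sum_{o(g)=o(e)}|q(g)|^2-1\bigr)=0$ and the off-diagonal condition is $q(e)q(f)^*\bigl(\sum_{o(g)=o(e)}|q(g)|^2-1\bigr)=0$, and concludes using $q(e)\neq 0$. You instead exploit the structural factorization ${\bf U}={\bf S}{\bf C}$ with ${\bf S}={\bf J}_0$ a real symmetric involution, reduce to unitarity of each block ${\bf C}_u=2|\phi_u\rangle\langle\phi_u|-{\bf I}$, and use the reflection identity ${\bf C}_u^2=4(s_u-1)|\phi_u\rangle\langle\phi_u|+{\bf I}$ with $s_u=\langle\phi_u|\phi_u\rangle$ real and hence central. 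Your key observations — that $|\phi_u\rangle\langle\phi_u|$ is quaternionic Hermitian, that the scalar $s_u$ commutes through because it is real, and that $|\phi_u\rangle\langle\phi_u|\neq{\bf O}$ because $q$ takes values in $\HM^*$ — are all sound, and the reindexing $g=f^{-1}$ matching $\{f:t(f)=u\}$ with $\{g:o(g)=u\}$ is handled correctly. What your approach buys is conceptual clarity and reusability: it isolates the single-vertex computation, makes transparent that each coin is an involution precisely when $\|\phi_u\|=1$, and would extend verbatim to any coin of the form $2|\phi\rangle\langle\phi|-{\bf I}$. What the paper's approach buys is that it verifies unitarity directly against the displayed entries of ${\bf U}$ without needing to first confirm that ${\bf U}={\bf S}{\bf C}$ reproduces those entries (the bookkeeping step you correctly flag as necessary but routine). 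Both are complete proofs.
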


\begin{proof} 
Observing that 
\[
({\bf U}{\bf U}^*)_{ef}=\sum_{\substack{g{\in}D(G)\\t(g)=o(e)=o(f)}}{\bf U}_{eg}({\bf U}_{fg})^*,
\]
one can readily check by direct calculations that 
$({\bf U}{\bf U}^*)_{ee}=1$ is equivalent to 
\begin{equation}\label{EqnUnitaryCond1}
|q(e)|^2(\sum_{\substack{g{\in}D(G)\\o(g)=o(e)}}|q(g)|^2-1)=0,
\end{equation}
and that $({\bf U}{\bf U}^*)_{ef}=0$ for $e{\;\neq\;}f$ with $o(e)=o(f)$ is equivalent to 
\begin{equation}\label{EqnUnitaryCond2}
q(e)q(f)^*(\sum_{\substack{g{\in}D(G)\\o(g)=o(e)=o(f)}}|q(g)|^2-1)=0.
\end{equation}
Moreover, if $o(e){\;\neq\;}o(f)$ then $({\bf U}{\bf U}^*)_{ef}=0$ holds directly 
from the definition of ${\bf U}$. 
Since $q(e){\;\neq\;}0$ for every $e{\;\in\;}D(G)$, 
$\sum_{g{\in}D(G)o(g)=u}|q(g)|^2=1$ for every $u{\;\in\;}V(G)$ is equivalent to 
(\ref{EqnUnitaryCond1}) and (\ref{EqnUnitaryCond2}). 
\end{proof}

We put $a(e)=\sqrt{2}q(e)$ and $b(e)=\sqrt{2}q(e^{-1})$ in (\ref{DefMatrixKandL}). 
Then one can see that ${\bf K}{\bf L}^*-{\bf J}_0={\bf U}$ and that $\tilde{\bf W}$ 
and $\tilde{\bf D}$ in (\ref{DefMatrixTildeW}) turn into: 
\begin{equation}\label{EqnQSzegedyWandDMatrix}
\begin{split}
\tilde{\bf W}_{uv}&=
\begin{cases}
2q((u,v))^*q((v,u)) & \text{if $(v,u){\;\in\;}D(G)$,}\\
0 & \text{otherwise,}
\end{cases}\\
\tilde{\bf D}_{uv}&=
\begin{cases}
\displaystyle \sum_{e{\in}D(G),o(e)=u}
2|q(e)|^2 & \text{if $u=v$,}\\
0 & \text{otherwise.}
\end{cases}
\end{split}
\end{equation}
(\ref{EqnQSzegedyWandDMatrix}) implies that $\tilde{\bf W}$ is quaternionic 
Hermitian, that is $\tilde{\bf W}^*=\tilde{\bf W}$. 
Therefore $\psi(\tilde{\bf W})$ is Hermitian and 
all eigenvalues of $\psi(\tilde{\bf W})$ are real numbers. 
Furthermore if $q$ satisfies Theorem \ref{ThmUnitaryCondition}, then 
$\tilde{\bf D}=2{\bf I}_{n}$ holds and we have from Theorem \ref{QuatDetFormula} that 
\begin{equation}\label{EqnDeterminantformula}
\begin{split}
&\det({\bf I}_{2m'}-t\psi({\bf U}))\\
&=(1-t^2)^{2m_0-2n}(1+t)^{2m_1}
\det({\bf I}_{2n}-t\psi(\tilde{\bf W})+t^2(\psi(\tilde{\bf D})-{\bf I}_{2n}))\\
&=(1-t^2)^{2m_0-2n}(1+t)^{2m_1}
\det({\bf I}_{2n}-t\psi(\tilde{\bf W})+t^2{\bf I}_{2n}).
\end{split}
\end{equation}
Putting $t=1/{\lambda}$, (\ref{EqnDeterminantformula}) yields 
\begin{equation}\label{EqnCharacteristicPoly}
\begin{split}
&\det({\lambda}{\bf I}_{2m'}-\psi({\bf U}))\\
&=(\lambda^2-1)^{2m_0-2n}(\lambda+1)^{2m_1}
\det({\lambda^2}{\bf I}_{2n}-{\lambda}\psi(\tilde{\bf W})+{\bf I}_{2n})\\
&=(\lambda-1)^{2m_0-2n}(\lambda+1)^{2m_0-2n+2m_1}
\det({\lambda^2}{\bf I}_{2n}-{\lambda}\psi(\tilde{\bf W})+{\bf I}_{2n}).
\end{split}
\end{equation}
We denote by $\Spec ({\bf A})$ the multiset of eigenvalues of a complex square matrix 
${\bf A}$ counted with multiplicity. 
Using (\ref{EqnCharacteristicPoly}), we obtain the spectral mapping theorem 
as described below: 

\begin{thm}\label{ThmSpecMapping}
$\Spec(\psi({\bf U}))$ has $2m'$ elements. 
If $G'=(V(G),E(G)-L(G))$ is not a tree, then $4n$ of them are 
\begin{equation}\label{EqnEigenvalueOfPsiU}
\lambda=\dfrac{\mu}{2}{\pm}i\sqrt{1-\Big{(}\dfrac{\mu}{2}\Big{)}^2},
\end{equation}
where $\mu{\;\in\;}\Spec(\psi(\tilde{\bf W}))$. 
The remaining $2m'-4n$ eigenvalues are equal to $1$ with multiplicity $2m_0-2n$ and to 
$-1$ with multiplicity $2m_0+2m_1-2n$. 
If $G'$ is a tree and $G$ has no loop, then 
\begin{equation*}
\begin{split}
\Spec(\psi({\bf U}))
=\Big{\{}\,\dfrac{\mu}{2}{\pm}i\sqrt{1-\Big{(}\dfrac{\mu}{2}\Big{)}^2\,}\;
\Bigl|\; \mu {\in} \Spec(\psi(\tilde{\bf W}))\Big{\}}-\{1,1,-1,-1\},
\end{split}
\end{equation*}
and if $G'$ is a tree and $G$ has at least one loop, then 
\begin{equation*}
\begin{split}
\Spec(\psi ({\bf U}))
=\bigg{[}\Big{\{}\,\dfrac{\mu}{2}{\pm}i\sqrt{1-\Big{(}\dfrac{\mu}{2}\Big{)}^2\,}\;
\Bigl|\; \mu {\in} \Spec(\psi(\tilde{\bf W}))\Big{\}}-\{1,1\}\bigg{]}{\cup}
\{\underbrace{-1,\ldots,-1}_{2m_1-2}\}.
\end{split}
\end{equation*}
In all cases, each eigenvalue $\mu{\;\in\;}\Spec(\psi(\tilde{\bf W}))$ takes values 
between $-2$ and $2$. 
Elements of $\Spec(\psi ({\bf U}))$ occur in complex conjugate pairs such as 
$\lambda_1,\overline{\lambda_1},{\cdots},\lambda_{2m},\overline{\lambda_{2m}}$, 
and the set of right eigenvalues of ${\bf U}$ is given by 
$\sigma_r({\bf U})=\bigcup_{r=1}^{2m}\lambda_{r}^{\HM^*}$. 
\end{thm}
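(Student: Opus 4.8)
The plan is to read off the spectrum of $\psi(\mathbf{U})$ directly from the factorization of the characteristic polynomial in (\ref{EqnCharacteristicPoly}), and then translate the information back to the right eigenvalues of $\mathbf{U}$ using Theorem \ref{QuatEigen}. First I would record that $\psi(\mathbf{U})$ is a $2m' \times 2m'$ complex matrix, so $\Spec(\psi(\mathbf{U}))$ has exactly $2m'$ elements counted with multiplicity. The matrix $\psi(\tilde{\mathbf{W}})$ is Hermitian of size $2n$, so its spectrum consists of $2n$ real numbers $\mu$; I would first argue $|\mu| \le 2$. This follows because $\mathbf{U}$ is quaternionic unitary (Theorem \ref{ThmUnitaryCondition} applies since $q$ satisfies the hypothesis), hence $\psi(\mathbf{U})$ is a complex unitary matrix and all its eigenvalues lie on the unit circle; any root $\lambda$ of $\lambda^2 - \mu \lambda + 1 = 0$ coming from (\ref{EqnCharacteristicPoly}) must therefore satisfy $|\lambda| = 1$, which forces $\mu = 2\RP\lambda \in [-2,2]$, and since these $2n$ quadratic factors account for all of $\Spec(\psi(\tilde{\mathbf{W}}))$, every $\mu$ lies in $[-2,2]$. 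For such $\mu$, solving the quadratic gives $\lambda = \mu/2 \pm i\sqrt{1 - (\mu/2)^2}$, which is (\ref{EqnEigenvalueOfPsiU}); each $\mu$ contributes the conjugate pair, so these factors contribute $4n$ eigenvalues of $\psi(\mathbf{U})$.

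Next I would count the remaining eigenvalues. From (\ref{EqnCharacteristicPoly}), after extracting the degree-$4n$ polynomial $\det(\lambda^2 \mathbf{I}_{2n} - \lambda\psi(\tilde{\mathbf{W}}) + \mathbf{I}_{2n})$, the leftover factor is $(\lambda-1)^{2m_0-2n}(\lambda+1)^{2m_0+2m_1-2n}$, contributing $1$ with multiplicity $2m_0 - 2n$ and $-1$ with multiplicity $2m_0 + 2m_1 - 2n$, for a total of $4m_0 + 2m_1 - 4n = 2m' - 4n$ further eigenvalues. The subtlety is the sign of these exponents: when $G' = (V(G), E(G) - L(G))$ is not a tree, the Betti number is positive and one checks $2m_0 - 2n \ge 0$ (more precisely $m_0 - n \ge -1$ with equality only for a tree, and the presence of a non-loop cycle pushes it to $m_0 - n \ge 0$), so both exponents are genuine multiplicities and the bookkeeping above is literal. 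When $G'$ is a tree, $m_0 = n - 1$, so $2m_0 - 2n = -2$; the factor $(\lambda-1)^{-2}$ is not a polynomial factor but rather cancels two factors $(\lambda - 1)$ hidden inside $\det(\lambda^2\mathbf{I}_{2n} - \lambda\psi(\tilde{\mathbf{W}}) + \mathbf{I}_{2n})$. Here I would note that $\mu = 2$ is always in $\Spec(\psi(\tilde{\mathbf{W}}))$ for a connected graph (this is where the row-sum/unitarity structure of $\tilde{\mathbf{W}}$ enters, analogous to the Perron eigenvalue of a stochastic matrix, doubled by $\psi$ to multiplicity two), and that $\lambda = 1$ is the double root of $\lambda^2 - 2\lambda + 1$. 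Thus in the tree case with no loops one simply removes $\{1,1,-1,-1\}$ from the image set, and with loops one removes $\{1,1\}$ and adjoins $-1$ with multiplicity $2m_1 - 2$ (since $2m_0 + 2m_1 - 2n = 2m_1 - 2$).

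Finally I would address the right-eigenvalue conclusion. Since $\psi(\mathbf{U})$ is the complexification of a quaternionic matrix, its spectrum is closed under complex conjugation with matching multiplicities, so we may list $\Spec(\psi(\mathbf{U})) = \{\lambda_1, \overline{\lambda_1}, \ldots, \lambda_{2m}, \overline{\lambda_{2m}}\}$ (here $2m' = 4m$ when $G$ has no loops, and in general one keeps $2m'$ complex eigenvalues in conjugate pairs, re-indexed to match the statement's notation); the eigenvalues $1$ and $-1$ are self-conjugate and pair with themselves. Then Theorem \ref{QuatEigen} applied to $\mathbf{M} = \mathbf{U}$ gives $\sigma_r(\mathbf{U}) = \bigcup_{r} \lambda_r^{\HM^*}$, which is the last assertion. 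I expect the main obstacle to be the tree case, specifically justifying cleanly that when $m_0 - n = -1$ the would-be negative exponents of $(\lambda \mp 1)$ really correspond to cancellations against factors of $\det(\lambda^2\mathbf{I}_{2n} - \lambda\psi(\tilde{\mathbf{W}}) + \mathbf{I}_{2n})$ rather than to a contradiction — this requires knowing that $\mu = 2$ (and, when loops are present, accounting for the $-1$ eigenvalues) occurs in $\Spec(\psi(\tilde{\mathbf{W}}))$ with the right multiplicity, which in turn rests on the structure of $\tilde{\mathbf{W}}$ as a weighted analogue of a transition matrix together with the normalization $\tilde{\mathbf{D}} = 2\mathbf{I}_n$ forced by unitarity.
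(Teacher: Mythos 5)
Your overall route is the same as the paper's: read the spectrum off the factorization \eqref{EqnCharacteristicPoly}, use unitarity of $\psi({\bf U})$ to force $|\lambda|=1$ and hence $\mu=2\RP\lambda\in[-2,2]$, split into the non-tree case ($m_0\ge n$, so the exponents are honest multiplicities) and the tree case ($m_0=n-1$), and finish with Theorem \ref{QuatEigen} for $\sigma_r({\bf U})$. All of that matches the paper and is correct.

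The one place you go astray is the justification of the cancellation in the tree case. You assert that $\mu=2$ lies in $\Spec(\psi(\tilde{\bf W}))$ ``for a connected graph'' by a Perron/row-sum argument, with $\tilde{\bf D}=2{\bf I}_n$ playing the role of a stochastic normalization. This claim is false in general: the row sums of $\tilde{\bf W}$ are $\sum_v 2q((u,v))^*q((v,u))$, which the unitarity condition $\sum_{o(g)=u}|q(g)|^2=1$ does not control (it controls $\sum_v 2|q((u,v))|^2$, i.e.\ $\tilde{\bf D}$, not $\tilde{\bf W}$'s row sums), and indeed in the paper's own $K_3$ example $\Spec(\psi(\tilde{\bf W}))=\{-2/3,-2/3,4/3,4/3,4/3,4/3\}$ does not contain $2$. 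The correct closing move — and the one the paper uses — is much simpler and is already half-present in your write-up: the left-hand side of \eqref{EqnCharacteristicPoly} is $\det(\lambda{\bf I}_{2m'}-\psi({\bf U}))$, a genuine polynomial in $\lambda$, so when $2m_0-2n=-2$ the product $\prod_{r=1}^{2n}(\lambda^2-\mu_r\lambda+1)$ is \emph{forced} to be divisible by $(\lambda-1)^2$ (and by $(\lambda+1)^2$ as well when $m_1=0$). No structural input about $\tilde{\bf W}$ is needed; the divisibility is an automatic consequence of polynomiality, and it is this divisibility that then tells you $\mu=2$ (and, in the loopless tree case, $\mu=-2$) occurs in $\Spec(\psi(\tilde{\bf W}))$ with the right multiplicity, not the other way around. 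With that substitution the ``main obstacle'' you flag disappears and the rest of your argument stands.
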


\begin{proof} Let $\mu_1,{\cdots},\mu_{2n}$ be eigenvalues of $\psi(\tilde{\bf W})$. 
Then it follows from (\ref{EqnCharacteristicPoly}) that 
\[
\det ( \lambda {\bf I}_{2m'} - \psi({\bf U}) )= 
(\lambda-1)^{2m_0-2n}(\lambda+1)^{2m_0-2n+2m_1}\prod_{r=1}^{2n}(\lambda {}^2-\mu_r\lambda+1). 
\]
Solving $\lambda {}^2 -\mu_r\lambda+1 =0$, we obtain complex eigenvalues of 
$\psi({\bf U})$ as follows: 
\begin{equation}\label{EqnEigenvaluesFormula}
\lambda = \dfrac{\mu_r \pm \sqrt{\mu_r^2-4}}{2}.
\end{equation}
Since $\psi({\bf U})$ is unitary, $|\lambda|=1$ for every eigenvalue $\lambda$ of 
$\psi({\bf U})$, and from (\ref{EqnEigenvaluesFormula}) we have 
$\lambda=1$ (resp. $\lambda=-1$) if and only if $\mu_r=2$ (resp. $\mu_r=-2$). 
It follows that eigenvalues of $\psi(\tilde{\bf W})$ take values between $-2$ and $2$. 
Hence (\ref{EqnEigenvaluesFormula}) can be reformulated as  
\[
\lambda=\dfrac{\mu_r}{2}{\pm}i\sqrt{1-\Big{(}\dfrac{\mu_r}{2}\Big{)}^2}
\]
whose real part and imaginary part are $\dfrac{\mu_r}{2}$ and 
${\pm}\sqrt{1-\Big{(}\dfrac{\mu_r}{2}\Big{)}^2}$ respectively. 
If $G'$ is not a tree, then $m_0{\;\geq\;}n$ and hence the statement clearly holds. 
If $G'$ is a tree, then $m_0=n-1$. Since $\det ( \lambda {\bf I}_{2m'} - \psi({\bf U}) )$ is 
a polynomial of $\lambda$, $\prod_{r=1}^{2n}(\lambda^2-\mu_r\lambda+1)$ must 
have the factor $(\lambda{}-1)^2$ and, in addition, the factor $(\lambda{}+1)^2$ if 
$G$ has no loops. Thus the statement holds for trees. 
Consequently, $\sigma_r({\bf U})$ is obtained directly from Theorem \ref{QuatEigen}. 
\end{proof}

Let ${\bf T}=\tilde{\bf W}/2$ which is given by 
\begin{equation*}\label{EqnQSzegedyTMatrix}
{\bf T}_{uv}=
\begin{cases}
q((u,v))^*q((v,u)) & \text{if $(v,u){\;\in\;}D(G)$,}\\
0 & \text{otherwise.}
\end{cases}
\end{equation*}
Then $\Spec(\psi({\bf T}))=\{\mu/2\;|\;\mu{\;\in\;}\Spec(\psi({\tilde{\bf W}}))\}$, and 
we can describe the elements of the form (\ref{EqnEigenvalueOfPsiU}) in $\Spec(\psi({\bf U}))$ 
slightly simpler as 
\[
\lambda=\nu{\pm}i\sqrt{1-\nu^2},
\]
where $\nu{\;\in\;}\Spec(\psi({\bf T}))$. 

\section{Eigenvectors of ${\bf U}$ derived from eigenvectors of $\tilde{\bf W}$}
\label{sec:Eigenvectors}
In this section, we show the way to derive right eigenvectors of ${\bf U}$ corresponding to 
complex right eigenvalues derived from those of $\tilde{\bf W}$. 
We mention that eigenspaces of Szegedy walk are discussed in \cite{HKSS2014} 
at length in the complex case. 
If $\lambda$ is an eigenvalue of $\psi({\bf U})$ obtained by solving $\lambda^2-\mu\lambda+1=0$ 
with $\mu{\;\in\;}\Spec(\psi(\tilde{\bf W}))$, then $\lambda$ and $\mu$ satisfy 
\begin{equation}\label{EqnRelationBetweenLambdaAndMu}
\mu=\lambda+\dfrac{1}{\lambda}.
\end{equation}
Let $\mathscr{K}_{\HM}=\oplus_{v{\in}V(G)}|v{\rangle}\HM$ be the finite dimensional 
right Hilbert space over $\HM$ spanned by vertices of $G$. 
We can identify $\tilde{\bf W}$ with the linear transformation 
with respect to the orthonormal basis $|v_1{\rangle},\ldots,|v_n{\rangle}$ defined as follows: 
\[
\tilde{\bf W}|v_r{\rangle}=\sum_{s=1}^n|v_s{\rangle}\tilde{\bf W}_{sr}
=\sum_{s=1}^n|v_s{\rangle}2q((v_s,v_r))^*q((v_r,v_s))
\]
Similarly, we consider ${\bf K}$ and ${\bf L}$ as linear maps from 
$\mathscr{K}_{\HM}$ to $\mathscr{H}_{\HM}$ so that ${\bf K}^*$ and ${\bf L}^*$ can be 
viewed as linear maps from $\mathscr{H}_{\HM}$ to $\mathscr{K}_{\HM}$. 
We shall show the following equations for later use. 

\begin{lem}\label{LemEquationsBetweenJKL}
\begin{enumerate}
\renewcommand{\labelenumi}{\rm (\arabic{enumi})}
\item
${\bf K}^*{\bf K}={\bf L}^*{\bf L}=2{\bf I}_n$.
\item
${\bf J}_0{\bf K}{\bf L}^*={\bf L}{\bf L}^*$. 
\item
${\bf L}^*{\bf J}_0{\bf L}=\tilde{\bf W}$. 
\end{enumerate}
\end{lem}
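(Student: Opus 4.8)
The plan is to reduce all three identities to the single structural fact ${\bf J}_0{\bf K}={\bf L}$, after which everything falls out with almost no computation. The reason this holds is the specialization made in this section: $a(e)=\sqrt{2}\,q(e)$ and $b(e)=\sqrt{2}\,q(e^{-1})$, so that $a(e^{-1})=b(e)$ for every $e\in D(G)$. Computing the $(e,v)$ entry, $({\bf J}_0{\bf K})_{ev}={\bf K}_{e^{-1}v}$, which equals $a(e^{-1})$ when $o(e^{-1})=t(e)=v$ and $0$ otherwise; since $a(e^{-1})=b(e)$, this is precisely ${\bf L}_{ev}$. Moreover ${\bf J}_0$ is a real symmetric permutation matrix (arc reversal is an involution) with ${\bf J}_0^2={\bf I}_{m'}$ and ${\bf J}_0^*={\bf J}_0$, so we also get ${\bf J}_0{\bf L}={\bf J}_0^2{\bf K}={\bf K}$.

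Granting this, statement (2) is immediate: ${\bf J}_0{\bf K}{\bf L}^*=({\bf J}_0{\bf K}){\bf L}^*={\bf L}{\bf L}^*$. Statement (3) is almost as quick: ${\bf L}^*{\bf J}_0{\bf L}={\bf L}^*({\bf J}_0{\bf L})={\bf L}^*{\bf K}$, and ${\bf L}^*{\bf K}=\tilde{\bf W}$ is one of the identities already recorded in Section \ref{sec:QuatDetFormula} (and which one checks directly against (\ref{EqnQSzegedyWandDMatrix}) via $a(e^{-1})=b(e)$). For (1), first note ${\bf L}^*{\bf L}=({\bf J}_0{\bf K})^*({\bf J}_0{\bf K})={\bf K}^*{\bf J}_0^*{\bf J}_0{\bf K}={\bf K}^*{\bf K}$, using ${\bf J}_0^*{\bf J}_0={\bf I}_{m'}$ and the anti-automorphism property of the conjugate transpose (legitimate since ${\bf J}_0$ is real, or, if one prefers, verified after applying $\psi$ by Lemma \ref{PsiLem} and Proposition \ref{ProInjectivityOfPsi}). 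It then suffices to evaluate ${\bf K}^*{\bf K}$ entrywise: $({\bf K}^*{\bf K})_{uv}=\sum_{e\in D(G)}({\bf K}_{eu})^*{\bf K}_{ev}$ vanishes for $u\neq v$, since no arc $e$ can satisfy $o(e)=u$ and $o(e)=v$ simultaneously, and equals $\sum_{e:\,o(e)=u}a(e)^*a(e)=\sum_{e:\,o(e)=u}|a(e)|^2=2\sum_{e:\,o(e)=u}|q(e)|^2=2$ for $u=v$, by the unitarity hypothesis of Theorem \ref{ThmUnitaryCondition}. Hence ${\bf K}^*{\bf K}={\bf L}^*{\bf L}=2{\bf I}_n$.

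There is no real obstacle here; the only points needing care are the bookkeeping conventions for the quaternionic conjugate transpose, the identity $x^*x=|x|^2$ for quaternions in the diagonal computation, and keeping straight that $|a(e)|^2=2|q(e)|^2$. One could instead prove (2) and (3) by a direct entrywise comparison using (\ref{DefMatrixKL*}), (\ref{DefMatrixTildeW}) and $a(e^{-1})=b(e)$, but routing everything through ${\bf J}_0{\bf K}={\bf L}$ is the cleanest path and also makes transparent why these particular combinations of ${\bf J}_0$, ${\bf K}$, ${\bf L}$ collapse.
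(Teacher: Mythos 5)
Your proof is correct, and it takes a genuinely different route from the paper. The paper proves all three identities by direct entrywise computation: for (2) it expands $({\bf J}_0{\bf K}{\bf L}^*)_{ef}$ and $({\bf L}{\bf L}^*)_{ef}$ separately and observes both equal $\sum_{u=t(e)=t(f)}2q(e^{-1})q(f^{-1})^*$, and for (3) it expands $({\bf L}^*{\bf J}_0{\bf L})_{uv}$ and matches it against (\ref{EqnQSzegedyWandDMatrix}); only (1) is handled essentially as you do, via the unitarity condition of Theorem \ref{ThmUnitaryCondition}. You instead isolate the single structural identity ${\bf J}_0{\bf K}={\bf L}$ (equivalently ${\bf J}_0{\bf L}={\bf K}$, using ${\bf J}_0^2={\bf I}_{m'}$, which holds because arc reversal is an involution fixing loops), which is valid precisely because of the specialization $a(e)=\sqrt{2}\,q(e)$, $b(e)=\sqrt{2}\,q(e^{-1})$, so that $a(e^{-1})=b(e)$. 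From there (2) is a one-line substitution, and (3) reduces to $\tilde{\bf W}={\bf L}^*{\bf K}$, which the paper indeed records (without proof) in Section \ref{sec:QuatDetFormula}. Your route buys conceptual transparency — it explains \emph{why} these particular combinations of ${\bf J}_0$, ${\bf K}$, ${\bf L}$ collapse — at the cost of relying on the section-specific choice of $a$ and $b$ (the paper's computations for (2) and (3) are likewise carried out only in that specialization, so nothing is lost). The points you flag as needing care (order reversal under the quaternionic conjugate transpose, $x^*x=|x|^2$, and $|a(e)|^2=2|q(e)|^2$) are exactly the right ones, and you handle them correctly; note also that your appeal to the identity $({\bf A}{\bf B})^*={\bf B}^*{\bf A}^*$ is valid for arbitrary quaternionic matrices, not only when one factor is real.
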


\begin{proof}
(1) Using Theorem \ref{ThmUnitaryCondition}, it follows that 
\begin{equation*}
\begin{split}
({\bf K}^*{\bf K})_{uv}&=\sum_{e{\in}D(G)}({\bf K}_{eu})^*{\bf K}_{ev}
=\sum_{\substack{e{\in}D(G)\\o(e)=u=v}}2|q(e)|^2=2\delta_{uv},\\
({\bf L}^*{\bf L})_{uv}&=\sum_{e{\in}D(G)}({\bf L}_{eu})^*{\bf L}_{ev}
=\sum_{\substack{e{\in}D(G)\\t(e)=u=v}}2|q(e^{-1})|^2=2\delta_{uv}.
\end{split}
\end{equation*}
(2) By direct computations, we have 
\begin{equation*}
\begin{split}
({\bf J}_0{\bf K}{\bf L}^*)_{ef}
&=\sum_{g{\in}D(G),u{\in}V(G)}{\bf J}_{eg}{\bf K}_{gu}({\bf L}_{fu})^*
=\sum_{u{\in}V(G)}{\bf K}_{e^{-1}u}({\bf L}_{fu})^*\\
&=\sum_{\substack{u{\in}V(G)\\u=o(e^{-1})=t(f)}}2q(e^{-1})q(f^{-1})^*
=\sum_{\substack{u{\in}V(G)\\u=t(e)=t(f)}}2q(e^{-1})q(f^{-1})^*,\\
({\bf L}{\bf L}^*)_{ef}&=\sum_{u{\in}V(G)}{\bf L}_{eu}({\bf L}_{fu})^*
=\sum_{\substack{u{\in}V(G)\\u=t(e)=t(f)}}2q(e^{-1})q(f^{-1})^*.
\end{split}
\end{equation*}
(3) By a direct computation, we have 
\begin{equation*}
\begin{split}
({\bf L}^*{\bf J}_0{\bf L})_{uv}
&=\sum_{e,f{\in}D(G)}({\bf L}^*)_{ue}{\bf J}_{ef}{\bf L}_{fv}
=\sum_{e{\in}D(G)}({\bf L}^*)_{ue}{\bf L}_{e^{-1}v}\\
&=\sum_{\substack{e{\in}D(G)\\t(e)=u,t(e^{-1})=v}}({\bf L}_{eu})^*{\bf L}_{e^{-1}v}\\
&=\begin{cases}2q((u,v))^*q((v,u)) & \text{if $(v,u){\;\in\;}D(G)$,}\\
0 & \text{otherwise.}\end{cases}
\end{split}
\end{equation*}
\end{proof}

Consequently, we can deduce that by using an eigenvector of $\tilde{\bf W}$ 
corresponding to $\mu{\;\in\;}$ $\Spec(\psi(\tilde{\bf W}))$, 
one can express an eigenvector of ${\bf U}$ corresponding to an eigenvalue $\lambda$ 
related to $\mu$ as follows:  

\begin{thm}\label{ThmEigenvectors}
Suppose that ${\bf v}{\;\in\;}\mathscr{K}_{\HM}$ is a right eigenvector 
corresponding to a complex right eigenvalue $\mu$ of $\tilde{\bf W}$, 
namely, $\tilde{\bf W}{\bf v}={\bf v}\mu$. 
Then the following vector in $\mathscr{H}_{\HM}$: 
\[
{\bf e}={\bf J}_0{\bf L}{\bf v}-{\bf L}{\bf v}\dfrac{1}{\lambda}
\]
is a right eigenvector of ${\bf U}$ corresponding to the complex right eigenvalue $\lambda$, 
where $\mu$ and $\lambda$ satisfy (\ref{EqnRelationBetweenLambdaAndMu}). 
\end{thm}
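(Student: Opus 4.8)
The plan is to verify directly that $\mathbf{U}\mathbf{e}=\mathbf{e}\lambda$, using the algebraic identities relating $\mathbf{K},\mathbf{L},\mathbf{J}_0$ and $\tilde{\mathbf{W}}$. The first thing I would isolate is one identity not stated explicitly in Lemma \ref{LemEquationsBetweenJKL} but already implicit in its proof: with the choices $a(e)=\sqrt{2}\,q(e)$, $b(e)=\sqrt{2}\,q(e^{-1})$, left multiplication by $\mathbf{J}_0$ relabels the arc $e$ by $e^{-1}$, so that
\[
\mathbf{J}_0\mathbf{K}=\mathbf{L}\qquad\text{and equivalently}\qquad \mathbf{J}_0\mathbf{L}=\mathbf{K}
\]
(the second form coming from $\mathbf{J}_0^2=\mathbf{I}_{m'}$). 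In particular $\mathbf{e}=\mathbf{J}_0\mathbf{L}\mathbf{v}-\mathbf{L}\mathbf{v}\,\lambda^{-1}=\mathbf{K}\mathbf{v}-\mathbf{L}\mathbf{v}\,\lambda^{-1}$, where $\lambda^{-1}$ is legitimate since $\psi(\mathbf{U})$ is unitary and hence $\lambda\neq 0$.

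Next I would substitute $\mathbf{U}=\mathbf{K}\mathbf{L}^*-\mathbf{J}_0$, expand $\mathbf{U}\mathbf{e}$ into four terms, and reduce each one by the identities $\mathbf{L}^*\mathbf{K}=\tilde{\mathbf{W}}$, $\mathbf{L}^*\mathbf{L}=2\mathbf{I}_n$ (Lemma \ref{LemEquationsBetweenJKL}(1)), $\mathbf{J}_0\mathbf{K}=\mathbf{L}$, $\mathbf{J}_0\mathbf{L}=\mathbf{K}$:
\begin{align*}
\mathbf{U}\mathbf{e}&=\mathbf{K}(\mathbf{L}^*\mathbf{K})\mathbf{v}-\mathbf{K}(\mathbf{L}^*\mathbf{L})\mathbf{v}\,\lambda^{-1}-(\mathbf{J}_0\mathbf{K})\mathbf{v}+(\mathbf{J}_0\mathbf{L})\mathbf{v}\,\lambda^{-1}\\
&=\mathbf{K}\tilde{\mathbf{W}}\mathbf{v}-2\mathbf{K}\mathbf{v}\,\lambda^{-1}-\mathbf{L}\mathbf{v}+\mathbf{K}\mathbf{v}\,\lambda^{-1}.
\end{align*}
Using $\tilde{\mathbf{W}}\mathbf{v}=\mathbf{v}\mu$ together with the (crucial, and essentially the only subtle) fact that left multiplication by a quaternionic matrix commutes with right multiplication by a scalar — so $\mathbf{K}(\mathbf{v}\mu)=(\mathbf{K}\mathbf{v})\mu$ — this collapses to $\mathbf{K}\mathbf{v}\,(\mu-\lambda^{-1})-\mathbf{L}\mathbf{v}$. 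Finally, relation (\ref{EqnRelationBetweenLambdaAndMu}) gives $\mu-\lambda^{-1}=\lambda$, so $\mathbf{U}\mathbf{e}=\mathbf{K}\mathbf{v}\,\lambda-\mathbf{L}\mathbf{v}=(\mathbf{K}\mathbf{v}-\mathbf{L}\mathbf{v}\,\lambda^{-1})\lambda=\mathbf{e}\lambda$, as desired.

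It remains to check $\mathbf{e}\neq\mathbf{0}$, so that $\mathbf{e}$ genuinely is an eigenvector. Here I would argue by contradiction: if $\mathbf{K}\mathbf{v}=\mathbf{L}\mathbf{v}\,\lambda^{-1}$, then applying $\mathbf{L}^*$ and using $\mathbf{L}^*\mathbf{K}=\tilde{\mathbf{W}}$, $\mathbf{L}^*\mathbf{L}=2\mathbf{I}_n$ yields $\mathbf{v}\mu=2\mathbf{v}\,\lambda^{-1}$, hence $\mu=2\lambda^{-1}$; combined with $\mu=\lambda+\lambda^{-1}$ this forces $\lambda^2=1$. Since by Theorem \ref{ThmSpecMapping} $\lambda=\pm1$ corresponds exactly to $\mu=\pm2$, the vector $\mathbf{e}$ is automatically nonzero whenever $|\mu|<2$ (the endpoint cases $\mu=\pm2$ being precisely where the extra $\pm1$-eigenvalues of $\mathbf{U}$ live, and can be treated separately). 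There is no serious obstacle in any of this: the argument is a few lines of manipulation, and the only point requiring care is never to commute the scalars $\mu,\lambda^{\pm1}$ past quaternionic matrix entries — every cancellation must use associativity with the scalar kept on the right, and the one non-formal step is recognizing that $\mathbf{J}_0\mathbf{K}=\mathbf{L}$ is needed in addition to the three identities of Lemma \ref{LemEquationsBetweenJKL}.
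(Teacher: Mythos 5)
Your proposal is correct and follows essentially the same route as the paper: a direct verification that ${\bf U}{\bf e}={\bf e}\lambda$ using the identities of Lemma \ref{LemEquationsBetweenJKL} together with $\tilde{\bf W}={\bf L}^*{\bf K}$, your extra observation ${\bf J}_0{\bf K}={\bf L}$ (valid for the Szegedy choice $a(e)=\sqrt{2}q(e)$, $b(e)=\sqrt{2}q(e^{-1})$) being only a relabeling of the paper's form ${\bf U}={\bf J}_0({\bf L}{\bf L}^*-{\bf I})$. Your closing check that ${\bf e}\neq{\bf 0}$ whenever $|\mu|<2$ is a careful addition the paper's proof silently omits, though the boundary cases $\mu=\pm 2$ would still need a separate argument.
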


\begin{proof}
Since ${\bf K}{\bf L}^*-{\bf J}_0={\bf U}$, we have 
${\bf U}={\bf J}_0({\bf L}{\bf L}^*-{\bf I})$. Hence using Lemma \ref{LemEquationsBetweenJKL}, 
we obtain 
\begin{equation*}
\begin{split}
{\bf U}{\bf e}&={\bf J}_0({\bf L}{\bf L}^*-{\bf I})
({\bf J}_0{\bf L}{\bf v}-{\bf L}{\bf v}\dfrac{1}{\lambda})\\
&={\bf J}_0{\bf L}{\bf L}^*{\bf J}_0{\bf L}{\bf v}
-{\bf J}_0{\bf L}{\bf L}^*{\bf L}{\bf v}\dfrac{1}{\lambda}
-{\bf J}_0^2{\bf L}{\bf v}+{\bf J}_0{\bf L}{\bf v}\dfrac{1}{\lambda}\\
&={\bf J}_0{\bf L}\tilde{\bf W}{\bf v}-2{\bf J}_0{\bf L}{\bf v}\dfrac{1}{\lambda}
-{\bf L}{\bf v}+{\bf J}_0{\bf L}{\bf v}\dfrac{1}{\lambda}\\
&={\bf J}_0{\bf L}{\bf v}\Big{(}\lambda+\dfrac{1}{\lambda}\Big{)}
-{\bf J}_0{\bf L}{\bf v}\dfrac{1}{\lambda}-{\bf L}{\bf v}\\
&={\bf J}_0{\bf L}{\bf v}\lambda-{\bf L}{\bf v}
=\Big{(}{\bf J}_0{\bf L}{\bf v}-{\bf L}{\bf v}\dfrac{1}{\lambda}\Big{)}\lambda
={\bf e}\lambda
\end{split}
\end{equation*}
\end{proof}

\begin{rem}
{\rm (1)} Since ${\bf \tilde{W}}$ is quaternionic Hermitian, $\mu$ is real in fact. \\
{\rm (2)} ${\bf e}j$ is a right eigenvector corresponding to $\overline{\lambda}$ since 
${\bf v}j$ is a right eigenvector corresponding to $\mu$ and 
\[
{\bf e}j={\bf J}_0{\bf L}{\bf v}j-{\bf L}{\bf v}\dfrac{1}{\lambda}j
={\bf J}_0{\bf L}{\bf v}j-{\bf L}{\bf v}j\dfrac{1}{\overline{\lambda}}. 
\]
\end{rem}

We shall give an example of the spectrum and eigenvectors of the quaternionic Szegedy walk on 
a graph. 

\begin{exm}
Let $G=K_3$, the complete graph with three vertices $v_1,\,v_2,\,v_3$ and 
$e_1=(v_1,v_2),\,e_2=(v_2,v_1)=e_1^{-1},\,e_3=(v_2,v_3),\,e_4=(v_3,v_2)=e_3^{-1},\,
e_5=(v_3,v_1),\,e_6=(v_1,v_3)=e_5^{-1},\,e_7=(v_1,v_1),\,e_8=(v_2,v_2),\,e_9=(v_3,v_3)$. 
Suppose quaternionic weights on $G$ are given by 
\begin{equation*}
\begin{split}
&q(e_1)=\dfrac{i}{\sqrt{3}},\,q(e_2)=-\dfrac{i}{\sqrt{3}},\,
q(e_3)=\dfrac{j}{\sqrt{3}},q(e_4)=-\dfrac{j}{\sqrt{3}},\\
&q(e_5)=\dfrac{k}{\sqrt{3}},\,q(e_6)=-\dfrac{k}{\sqrt{3}},\,
q(e_7)=q(e_8)=q(e_9)=\dfrac{1}{\sqrt{3}}.
\end{split}
\end{equation*}
Then ${\bf U},\,\tilde{\bf W}$ and ${\bf L}$ are given by 
\begin{equation*}
\begin{split}
{\bf U}&=\bordermatrix{     & e_1 & e_2 & e_3 & e_4 & e_5 & e_6 & e_7 & e_8 & e_9 \cr
               e_1 &  0  & -\frac{1}{3} & 0 & 0 & -\frac{2j}{3} & 0 & \frac{2i}{3} & 0 & 0 \cr
               e_2 &  -\frac{1}{3}  & 0 & 0 & \frac{2k}{3} & 0 & 0 & 0 & -\frac{2i}{3} & 0 \cr
               e_3 &  -\frac{2k}{3}  & 0 & 0 & -\frac{1}{3} & 0 & 0 & 0 & \frac{2j}{3} & 0 \cr
               e_4 &  0  & 0 & -\frac{1}{3} & 0 & 0 & \frac{2i}{3} & 0 & 0 & -\frac{2j}{3} \cr
               e_5 &  0  & 0 & -\frac{2i}{3} & 0 & 0 & -\frac{1}{3} & 0 & 0 & \frac{2k}{3} \cr
               e_6 &  0  & \frac{2j}{3} & 0 & 0 & -\frac{1}{3} & 0 & -\frac{2k}{3} & 0 & 0 \cr
               e_7 &  0  & -\frac{2i}{3} & 0 & 0 & \frac{2k}{3} & 0 & -\frac{1}{3} & 0 & 0 \cr
               e_8 &  \frac{2i}{3}  & 0 & 0 & -\frac{2j}{3} & 0 & 0 & 0 & -\frac{1}{3} & 0 \cr
               e_9 &  0  & 0 & \frac{2j}{3} & 0 & 0 & -\frac{2k}{3} & 0 & 0 & -\frac{1}{3} 
            },\\
\tilde{\bf W}&=\bordermatrix{     & v_1 & v_2 & v_3 \cr
               v_1 &  \frac{2}{3}  & -\frac{2}{3} & -\frac{2}{3} \cr
               v_2 & -\frac{2}{3} & \frac{2}{3} & -\frac{2}{3} \cr
               v_3 & -\frac{2}{3} & -\frac{2}{3} & \frac{2}{3} 
            },\quad 
{\bf L}=\bordermatrix{     & v_1 & v_2 & v_3 \cr
               e_1 &  0  & -\sqrt{\frac{2}{3}}\,i & 0 \cr
               e_2 & \sqrt{\frac{2}{3}}\,i & 0 & 0 \cr
               e_3 & 0 & 0 & -\sqrt{\frac{2}{3}}\,j \cr
               e_4 & 0 & \sqrt{\frac{2}{3}}\,j & 0 \cr
               e_5 & -\sqrt{\frac{2}{3}}\,k & 0 & 0 \cr
               e_6 & 0 & 0 & \sqrt{\frac{2}{3}}\,k \cr
               e_7 & \sqrt{\frac{2}{3}} & 0 & 0 \cr
               e_8 & 0 & \sqrt{\frac{2}{3}} & 0 \cr
               e_9 & 0 & 0 & \sqrt{\frac{2}{3}} 
            },
\end{split}
\end{equation*}
whence $\Spec(\psi(\tilde{\bf W}))$ and $\Spec(\psi({\bf U}))$ turn out to be
\begin{equation*}
\begin{split}
&\Spec(\psi(\tilde{\bf W}))=\Bigg{\{} -\dfrac{2}{3},\,-\dfrac{2}{3},\,
\dfrac{4}{3},\,\dfrac{4}{3},\,\dfrac{4}{3},\,\dfrac{4}{3} \Bigg{\}},\\
&\Spec(\psi({\bf U}))=\Bigg{\{} -\dfrac{1}{3}{\pm}\dfrac{2\sqrt{2}}{3}i,\,
-\dfrac{1}{3}{\pm}\dfrac{2\sqrt{2}}{3}i,\,
\dfrac{2}{3}{\pm}\dfrac{\sqrt{5}}{3}i,\,\dfrac{2}{3}{\pm}\dfrac{\sqrt{5}}{3}i,\,
\dfrac{2}{3}{\pm}\dfrac{\sqrt{5}}{3}i,\,\\
&\qquad \dfrac{2}{3}{\pm}\dfrac{\sqrt{5}}{3}i,\,-1,\,-1,\,-1,\,-1,\,-1,\,-1 \Bigg{\}}.
\end{split}
\end{equation*}
Thus we obtain: 
\begin{equation*}
\begin{split}
\sigma_r({\bf U})&=\{-1\}{\cup}
\Bigg{(}-\dfrac{1}{3}+\dfrac{2\sqrt{2}}{3}i\Bigg{)}^{\!\!\HM^*}
{\cup}\Bigg{(}\dfrac{2}{3}+\dfrac{\sqrt{5}}{3}i\Bigg{)}^{\!\!\HM^*}.
\end{split}
\end{equation*}
The minimal polynomial of ${\bf U}$ is given by 
\[
p^{({\bf U})}(x)=p_1^{({\bf U})}(x)p_2^{({\bf U})}(x)p_3^{({\bf U})}(x), 
\]
where $p_1^{({\bf U})}(x)=x^2+\frac{2}{3}x+1$, 
$p_2^{({\bf U})}(x)=x^2-\frac{4}{3}x+1$ and $p_3^{({\bf U})}(x)=x+1$. 
Hence we have the decomposition of $\HM^9$ into root subspaces: 
\[
\HM^9=\mathscr{M}_1{\oplus}\mathscr{M}_2{\oplus}\mathscr{M}_3
\]
By using Theorem \ref{EigenCorres}, we obtain two 
right eigenvectors corresponding to the right eigenvalue $\mu=-\frac{2}{3}$ of $\tilde{W}$: 
\[
{\bf v}_1=\begin{pmatrix}1\\1\\1\end{pmatrix},\,
{\bf v}_2=\begin{pmatrix}j\\j\\j\end{pmatrix}={\bf v}_1j.
\]
Then, using Theorem \ref{ThmEigenvectors} we obtain right eigenvectors corresponding to the 
right eigenvalue $\lambda_{\pm}=(-1{\pm}2\sqrt{2}i)/3$ of ${\bf U}$ as follows: 
\[
{\bf u}_1^\pm=\begin{pmatrix}{\pm}\sqrt{2}+i\\{\mp}\sqrt{2}-i\\j{\pm}\sqrt{2}k\\-j{\mp}\sqrt{2}k\\
{\mp}\sqrt{2}j+k\\{\pm}\sqrt{2}j-k\\2{\pm}\sqrt{2}i\\2{\pm}\sqrt{2}i\\2{\pm}\sqrt{2}i\end{pmatrix},\,
{\bf u}_2^\pm=\begin{pmatrix}{\mp}\sqrt{2}j+k\\{\pm}\sqrt{2}j-k\\-1{\pm}\sqrt{2}i\\1{\mp}\sqrt{2}i\\
{\mp}\sqrt{2}-i\\{\pm}\sqrt{2}+i\\2j{\mp}\sqrt{2}k\\2j{\mp}\sqrt{2}k\\2j{\mp}\sqrt{2}k\end{pmatrix}
={\bf u}_1^{\mp}j,
\]
where ${\bf u}_r^+$ (resp. ${\bf u}_r^-$) is a right eigenvector corresponding to 
$\lambda_+$ (resp. $\lambda_-$) obtained from 
${\bf v}_r$ by using Theorem \ref{ThmEigenvectors} for $r=1,2$. 
${\bf u}_1^\pm,\,{\bf u}_2^\pm{\;\in\;}\mathscr{M}_1$, and  
one can check by using Proposition \ref{ProHLinearIndep} that ${\bf u}_1^+$ and ${\bf u}_2^+$ 
(resp. ${\bf u}_1^-$ and ${\bf u}_2^-$) are $\HM$-linearly independent. 

In the same fashion, we obtain four 
right eigenvectors corresponding to the right eigenvalue $\mu=\frac{4}{3}$ of $\tilde{W}$: 
\[
{\bf v}_3=\begin{pmatrix}1\\0\\-1\end{pmatrix},\,
{\bf v}_4=\begin{pmatrix}0\\1\\-1\end{pmatrix},\,
{\bf v}_5=\begin{pmatrix}j\\0\\-j\end{pmatrix}={\bf v}_3j,\,
{\bf v}_6=\begin{pmatrix}0\\j\\-j\end{pmatrix}={\bf v}_4j.
\]
Using Theorem \ref{ThmEigenvectors} again, we obtain right eigenvectors corresponding to the 
right eigenvalue $\lambda_{\pm}=(2{\pm}\sqrt{5}i)/3$ of ${\bf U}$ as follows: 
\begin{equation*}\begin{split}
&{\bf u}_3^\pm=\begin{pmatrix}3i\\{\mp}\sqrt{5}-2i\\-2j{\mp}\sqrt{5}k\\
3j\\{\mp}\sqrt{5}j-k\\{\mp}\sqrt{5}j-k\\1{\pm}\sqrt{5}i\\0\\-1{\mp}\sqrt{5}i\end{pmatrix},\qquad\quad\;\,
{\bf u}_4^\pm=\begin{pmatrix}{\pm}\sqrt{5}+2i\\-3i\\j{\mp}\sqrt{5}k\\j{\mp}\sqrt{5}k\\
-3k\\{\mp}\sqrt{5}j+2k\\0\\1{\pm}\sqrt{5}i\\-1{\mp}\sqrt{5}i\end{pmatrix},\; \\
&{\bf u}_5^\pm=\begin{pmatrix}3k\\{\pm}\sqrt{5}j-2k\\2{\mp}\sqrt{5}i\\-3\\
{\mp}\sqrt{5}+i\\{\mp}\sqrt{5}+i\\j{\mp}\sqrt{5}k\\0\\-j{\pm}\sqrt{5}k\end{pmatrix}
={\bf u}_3^{\mp}j,\;
{\bf u}_6^\pm=\begin{pmatrix}{\mp}\sqrt{5}j+2k\\-3k\\-1{\mp}\sqrt{5}i\\-1{\mp}\sqrt{5}i\\
3i\\{\mp}\sqrt{5}-2i\\0\\j{\mp}\sqrt{5}k\\-j{\pm}\sqrt{5}k\end{pmatrix}={\bf u}_4^{\mp}j,
\end{split}\end{equation*}
where ${\bf u}_r^+$ (resp. ${\bf u}_r^-$) is a right eigenvector corresponding to 
$\lambda_+$ (resp. $\lambda_-$) obtained from 
${\bf v}_r$ by using Theorem \ref{ThmEigenvectors} for $r=3,4,5,6$. 
${\bf u}_3^\pm,\,{\bf u}_4^\pm,\,{\bf u}_5^\pm,\,{\bf u}_6^\pm{\;\in\;}\mathscr{M}_2$, and 
it can be shown that ${\bf u}_3^+,\,{\bf u}_4^+,\,{\bf u}_5^+,\,{\bf u}_6^+$ 
(resp. ${\bf u}_3^-,\,{\bf u}_4^-,\,{\bf u}_5^-,\,{\bf u}_6^-$) are $\HM$-linearly independent. 

For the remaining right eigenvalue $-1$, we can find three $\HM$-linearly independent eigenvectors 
such as 
\[
{\bf u}_7=\begin{pmatrix}i\\i\\0\\0\\0\\0\\-1\\1\\0\end{pmatrix},\,
{\bf u}_8=\begin{pmatrix}j\\j\\-i\\-i\\1\\1\\0\\0\\0\end{pmatrix},\,
{\bf u}_9=\begin{pmatrix}i\\i\\j\\j\\0\\0\\-1\\0\\1\end{pmatrix}.
\]
It follows that
\[
\mathscr{M}_1=\bigoplus_{k=1}^2{\bf u}_k^+\HM,\;\mathscr{M}_2=\bigoplus_{k=3}^6{\bf u}_k^+\HM,\;
\mathscr{M}_3=\bigoplus_{k=7}^9{\bf u}_k\HM.
\]
\end{exm}

\par\noindent
{\bf Acknowledgment.} N. Konno is partially supported by the Grant-in-Aid for 
Scientific Research (Challenging Exploratory Research) of Japan Society for the Promotion 
of Science (Grant No. 15K13443). K. Matsue was partially supported by Program for Promoting 
the reform of national universities (Kyusyu University), Ministry of Education, Culture, Sports, Science and Technology (MEXT), Japan, and World Premier International Research Center Initiative (WPI), MEXT, Japan. 
H. Mitsuhashi and I. Sato are partially supported by the Grant-in-Aid 
for Scientific Research (C) of Japan Society for the Promotion of Science (Grant No. 16K05249 and 15K04985, respectively). 
We are grateful to Yu. Higuchi for fruitful discussions during the course of preparing this work. 
We thank K. Tamano, S. Matsutani, Y. Ide, M. Kosuda and O. Kada for helpful comments on early versions of this paper.

\begin{small}
\bibliographystyle{jplain}

\end{small}

\end{document}